\documentclass[conference]{IEEEtran}
\IEEEoverridecommandlockouts
\usepackage{amsmath,amsfonts}
\usepackage{amssymb,amsmath,amsthm}

\usepackage[letterpaper, top=1.78cm, bottom=2.57cm, left=1.7cm, right=1.7cm]{geometry}

\newtheoremstyle{custom}
  {0.5em} 
  {0.5em} 
  {\itshape} 
  {0.5em} 
  {\itshape\space} 
  {:} 
  {0.5em} 
  {\thmname{#1}\thmnumber{\textit{\hspace{0.5em}#2}}\thmnote{#3}} 

\theoremstyle{custom}

\newtheorem{Lemma}{Lemma}

\newtheorem{Proposition}{Proposition}
\newtheorem{Result}{Result}
\newtheorem{Property}{Property}

\makeatletter
\renewenvironment{proof}[1][\proofname]{\par
  \pushQED{\qed}%
  \normalfont \topsep6\p@\@plus6\p@\relax
  \trivlist
  \itemindent1.5em 
  \item[\hskip\labelsep
        \itshape
        #1\@addpunct{:}]\ignorespaces
}{%
  \popQED\endtrivlist\@endpefalse
}
\makeatother

\usepackage{bbm}
\usepackage{algorithm}
\usepackage{array}
\usepackage[caption=false,font=normalsize,labelfont=sf,textfont=sf]{subfig}
\usepackage{textcomp}
\usepackage{stfloats}
\usepackage{url}
\usepackage{verbatim}
\usepackage{graphicx}
\usepackage{cite}
\usepackage{color}
\def\BibTeX{{\rm B\kern-.05em{\sc i\kern-.025em b}\kern-.08em
    T\kern-.1667em\lower.7ex\hbox{E}\kern-.125emX}}

\newcommand{\LSDC}{\mathtt{LS}}
\newcommand{\MSDC}{\mathtt{MS}}
\newcommand{\Nf}{N_\mathrm{f}}
\newcommand{\Ns}{N_\mathrm{s}}
\newcommand{\NC}{N_{\mathrm{C}}}

\begin{document}


\title{Availability of Aerial Heterogeneous Networks for Reliable Emergency Communications}

\author{

\thanks{This work is supported in part by Natural Science Foundation of China (Grant No. 62121001), in part by Key Research and Development Program of Shannxi (Grant No. 2024CY2-GJHX-82), and in part by China Scholarship Council (Student ID: 202506960051). The work of M. Matthaiou was supported by the European Research Council (ERC) under the European Union’s Horizon 2020 research and innovation programme (grant agreement No. 101001331).}

\IEEEauthorblockN{
\fontsize{0.36cm}{1cm}\selectfont
Teng~Wu$^\dag$$^\ddagger$, Jiandong~Li$^\dag$, Junyu~Liu$^\dag$, Min~Sheng$^\dag$, Mohammadali Mohammadi$^\ddagger$,  Hien Quoc Ngo$^\ddagger$, and Michail Matthaiou$^\ddagger$}
\IEEEauthorblockA{
$^\dag$State Key Laboratory of ISN, Institute of Information Science, Xidian University, Xi’an, Shaanxi, 710071, China\\
$^\ddagger$Centre for Wireless Innovation (CWI), Queen's University Belfast, Belfast, BT3 9DT, United Kingdom\\
Email: t.wu@stu.xidian.edu.cn}}


\maketitle

\begin{abstract}
We investigate network availability (NA) in aerial heterogeneous networks (AHetNets) for effective emergency rescue, where diverse delay-constrained communication services must be provided to user equipments (UEs) with varying mobility. The heterogeneity in delay constraints and UE mobility introduces resource allocation conflicts and imbalances, which undermine communication reliability and challenge NA. Although unified resource allocation (URA) can mitigate these issues, it remains unclear whether NA can be sustained under such diverse conditions.
To address this, we derive expressions for the lower bound (LB) on NA in AHetNets under URA. Our analysis reveals that extended heterogeneity significantly degrades the LB due to resource limitations—even when the heterogeneity stems from additional services under less stringent delay constraints (LSDC) or from UEs with lower mobility. To overcome this degradation, we formulate and solve a joint optimization problem for the number of UEs sharing time-frequency resources ($K$) and pilot length ($\xi$), aiming to enhance the LB by improving spatial, frequency, and temporal resource efficiency. Simulation results validate our analysis and demonstrate that jointly optimizing $K$ and $\xi$ enables AHetNets to achieve the target NA under greater heterogeneity, outperforming existing resource allocation policies.
\end{abstract}

\begin{IEEEkeywords}
Aerial heterogeneous network, diverse delay constraints, diverse mobility, reliable emergency communication.
\end{IEEEkeywords}

\vspace{-1em}
\section{Introduction}
Providing emergency communications to counteract the detrimental effects of physical disasters is a key use case of sixth-generation (6G) globalization \cite{Dao2021Tuts}. Benefiting from flexibility and adaptability, unmanned aerial vehicle (UAV) communications are particularly suitable for emergencies \cite{Yao2022MNET}. On one hand, UAVs equipped with thermal imagers, infrared scanners, and cameras as aerial UEs (AUEs) can collect vital information for search and rescue. On the other hand, aerial networks with UAV-mounted flying access points (FAPs) can be swiftly deployed for providing emergency communications to AUEs and ground UEs (GUEs) of rescuers and survivors. 

Despite their potential, emergency communication services for disaster rescue face formidable challenges. Unreliability arises from dynamic channels due to FAP movement, wireless fading, and service burstiness \cite{Liu2025iotj,LiuTCOM2025}. Additionally, AUEs and GUEs exhibit distinct mobility and require communications under diverse delay constraints—from second-level delays for real-time media to sub-millisecond delays for remote control \cite{Yao2022MNET}. This heterogeneity causes resource allocation conflicts and imbalances \cite{Liu2025iotj}, undermining reliability. As a result, the NA of AHetNets is hindered from supporting effective emergency rescue, where NA is the probability that the quality-of-service (QoS) requirements, in terms of delay and reliability, for each UE's service are satisfied \cite{LiuTCOM2025}.

Recent works suggest that coordinated multi-point (CoMP) can address unreliability from dynamic channels, wireless fading, and service burstiness, even when AUEs and GUEs coexist \cite{LiuTCOM2025, Mohammadi2024Proc, Elwekeil:TCOM:2023}. Effective resource allocation under CoMP for diverse delay constraints \cite{Liu2025TVT} and UE mobility \cite{Zhang2025TVT} can enhance transmission rates and reliability. However, existing works largely overlook how delay constraints and UE mobility jointly affect NA. Most studies focus on a single delay constraint or UE mobility type \cite{LiuTCOM2025}, and although URA appears suitable for heterogeneous networks \cite{Abbas2025IoTJ}, its viability under diverse delay and mobility remains unclear. Consequently, NA in AHetNets is not yet well understood, and current resource allocation strategies cannot be directly applied to achieve the target NA.

Motivated by the above, this work investigates the NA of FAP CoMP-enabled AHetNets for reliable emergency communication (REC) services under a URA scheme. The main contributions of this paper are as follows:
\vspace{-0.1em}
\begin{itemize}
    \item We derive a LB on NA for performance analysis and show that it is significantly degraded under extended heterogeneity due to reduced resources in URA, even when heterogeneity stems from additional REC services under LSDC or from UEs with lower mobility.
    \item Our analysis reveals that jointly optimizing the number of UEs sharing time-frequency resources ($K$) and pilot length ($\xi$) can effectively improve spatial, frequency, and temporal resource efficiency, thereby enhancing the LB on NA. Building on this insight, we formulate and solve a joint optimization problem for $K$ and $\xi$ to enhance NA.
    \item  Our simulation results confirm that the target NA is achievable even under greater heterogeneity by jointly optimizing $K$ and $\xi$, outperforming existing policies.
\end{itemize}  

\vspace{-0.1em}
\textit{Notations}: ${\mathbb{E}_X}[  \cdot  ]$ denotes the expectation with respect to (w.r.t.) the random variable (RV) $X$; ${\mathbb{P}_X}(  \cdot  )$ denotes the probability w.r.t. the RV $X$; $\lfloor  \cdot  \rfloor $ and $\lceil  \cdot  \rceil $ are the floor and ceil operators, respectively; $\mathbbm{1}[ Y ]$ denotes the indicator function of event $Y$, where $\mathbbm{1}[ Y ]$ = 1 if event $Y$ is true, and $\mathbbm{1}[ Y ]$ = 0 otherwise; $z \sim {\cal C}{\cal N}( {0,{\sigma ^2}} )$ denotes a complex Gaussian RV $z$ with zero mean and variance $\sigma ^2$; ${f_{{Q^{ - 1}}}}(  \cdot  )$ is the inverse Q-function, while the Q-function is ${f_Q}( x ) = \frac{1}{{\sqrt {2\pi } }}\int_x^\infty  {\exp ( { - \frac{{{y^2}}}{2}} )dy} $. Finally, ${J_0}(  \cdot  )$ is the zeroth-order Bessel function of the first kind.

\section{System Model}\label{Sec_System_Model}
\vspace{-0.1cm}
\subsection{Network Model}
\vspace{-0.1cm}
We consider an uplink FAP CoMP-enabled AHetNet, which consists of $L$ fixed-wing UAV-mounted single-antenna FAPs and $M$ single-antenna UEs with diverse mobility. In general, $M \gg L$. All FAPs are connected to a central processor (CP) via wireless fronthaul links for centralized signal processing, while the CP can be deployed on the ground or in the sky \cite{LiuTCOM2025}. We consider that the disaster occurs in a circular area with a radius $W_{\mathrm{D}}$, which is denoted as the disaster area ${\cal A}$. The UEs include AUEs and GUEs, where GUEs are randomly distributed within ${\cal A}$, while AUEs are randomly distributed over the altitude range ${\cal H} \!=\! \left[ {{h_{\min }},{h_{\max }}} \right]$ above ${\cal A}$. Moreover, ${h_{\min }}$ and ${h_{\max }}$ are the minimum and maximum altitudes where AUEs are located, respectively. All FAPs follow periodic circular flight trajectories. The flight altitude, radius, and velocity of each FAP are $h_{\mathrm{F}}$, $W_{\mathrm{F}}$, and $v_{\mathrm{F}}$, respectively. Thus, the flight period of each FAP is $T \!=\! {{2\pi {W_{\mathrm{F}}}} \mathord{\left/  {\vphantom {{2\pi {W_{\mathrm{F}}}} {{v_{\mathrm{F}}}}}} \right.  \kern-\nulldelimiterspace} {{v_{\mathrm{F}}}}}$. We equally divide $T$ into $I$ time slots with a duration ${T_{\mathrm{S}}}$, while ${T_{\mathrm{S}}}$ is generally chosen as a value less than the coherence interval ${T_{\mathrm{C}}}$, i.e., $I \!=\! {T \mathord{\left/
{\vphantom {T {{T_{\mathrm{S}}}}}} \right.
\kern-\nulldelimiterspace} {{T_{\mathrm{S}}}}}$, $t \!=\! i{T_{\mathrm{S}}}$, $\forall t \!\in\! \left[ {0,T} \right]$, $\forall i \!\in\! \left[ {0,I} \right]$, and ${T_{\mathrm{S}}} \!\le\! {T_{\mathrm{C}}}$. The UE mobility in AHetNets is characterized by the relative velocity ${v_{\mathrm{r}}}$ between UEs and FAPs. We consider that all FAPs have the same ${v_{\mathrm{r}}}$ w.r.t. the same UE, while ${v_{\mathrm{r}}}$ may vary across different UEs due to diverse mobility. Moreover, we consider that each transmission distance from UEs to FAPs $d  \! = \! \sqrt {d_{\mathrm{H}}^2 + {h^2}} $ is constant during one time slot and varies between different time slots \cite{LiuTCOM2025}, where ${d_{\mathrm{H}}}$ and $h$ are the horizontal and vertical distances from UEs to FAPs, respectively. The farthest transmission distance from UEs to FAPs is ${\tilde d}  \!\! = \!\! \sqrt  {\!{{{\tilde d}_{\mathrm{H}}}^2} \!+\! {{{\tilde h}}^2}}$, where ${{\tilde d}_{\mathrm{H}}}  \! = \! 2{W_{\mathrm{D}}} + 2{W_{\mathrm{F}}}$ and ${\tilde h}  \! = \! \max \left\{ {h_{\mathrm{F}},\left| {h_{\mathrm{F}} \!-\! {h_{\min }}} \right|,\left| {h_{\mathrm{F}} \!-\! {h_{\max }}} \right|} \right\}$ are the farthest horizontal and vertical distances from UEs to FAPs, respectively.

\subsection{Traffic Model and Scheduling}
\vspace{-0.1cm}
In general, RECs are delay-sensitive services with burstiness, where delay constraints primarily include LSDC and more stringent delay constraints (MSDC) categories \cite{Yao2022MNET, Dong2021TWC, Liu2025iotj}. The service burstiness can be characterized by the variance of the random arrival rates \cite{Liu2025iotj}. Thus, the arrival processes of REC services are modeled as stochastic arrival processes with average arrival rate $\theta^{\mathtt{\varsigma}}$ and arrival rate variance $\sigma _\mathtt{\varsigma} ^2$, $\mathtt{\varsigma}  \in   \left\{ \LSDC, \MSDC \right\}$, where $\LSDC$ and $\MSDC$ denote REC services under LSDC and MSDC, respectively. To manage unreliability due to bursty services, we consider deploying a first-come-first-serve server in the system for uplink traffic scheduling, which is a common and effective method \cite{LiuTCOM2025, Liu2025iotj}. As a result, there will be a queue at the buffer in the UE. 
\par

\vspace{-0.1cm}
\subsection{Channel Model}\label{SubSec_channel_model}
\vspace{-0.1cm}
We consider both quasi-static and time-varying channels due to diverse delay constraints and UE mobility \cite{LiuTCOM2025,Elhoushy2021TWC}.

\subsubsection{Quasi-Static Channel}
The channels from UEs to FAPs are independent identically distributed (i.i.d.) during different coherence intervals and can be modeled as \cite{LiuTCOM2025}
\begin{equation}\label{eq_channel_quasi_static}
\setlength{\abovedisplayskip}{0pt}
\setlength{\belowdisplayskip}{0pt}
    {g} = \sqrt {\beta} {\psi},
\end{equation}
where ${\psi}$ denotes the small-scale and shadow fading that is modeled as $\kappa$-$\mu$ shadowed fading; $\beta$ denotes the path loss that is assumed to be known a priori and modeled as $\beta \!\left( d \right) = \mathcal{A}d^{ - 2 }$; $\mathcal{A}\left( {{\mathrm{dB}}} \right) =  - 20{\log _{10}}\left( {{{4\pi } \mathord{\left/
 {\vphantom {{4\pi } \lambda}} \right.
 \kern-\nulldelimiterspace} \lambda}} \right)$ denotes the path loss at 1 m with wavelength $\lambda = {{{v_{\mathrm{L}}}} \mathord{\left/
 {\vphantom {{{v_{\mathrm{L}}}} {{f_{\mathrm{c}}}}}} \right.
 \kern-\nulldelimiterspace} {{f_{\mathrm{c}}}}}$; ${f_{\mathrm{c}}}$ is the carrier frequency, while ${v_{\mathrm{L}}}$ is the speed of light.

\subsubsection{Time-Varying Channel}
The channels from UEs to FAPs between adjacent coherence intervals are correlated. We apply a first-order auto-regressive (AR) model to portray the channel correlation \cite{Elhoushy2021TWC}. In that, the channels can be modeled as
\begin{equation}\label{eq_channel_time_varying}
g = e_{\mathrm{c}} g_{-1} + \sqrt {1 - {e_{\mathrm{c}} ^2}} e_{\mathrm{unc}},
\end{equation}
where $e_{\mathrm{unc}} \sim {\cal C}{\cal N}\left( {0,\beta } \right)$ denotes the innovation component during one coherence interval that is uncorrelated with the channel during the previous coherence interval ${g_{ - 1}}$; $e_{\mathrm{c}} $ is the temporal correlation coefficient between adjacent coherence intervals that can be calculated by using Jakes fading model. Specifically, $e_{\mathrm{c}}  = {J_0}\left( {2\pi {f_{\mathrm{D}}}{t_{\mathrm{s}}}} \right)$, where ${{t_{\mathrm{s}}}}$ is the sampling time and ${f_{\mathrm{D}}} = {{{v_{\mathrm{r}}}{f_{\mathrm{c}}}} \mathord{\left/
 {\vphantom {{{v_{\mathrm{r}}}{f_{\mathrm{c}}}} {{v_{\mathrm{L}}}}}} \right.
 \kern-\nulldelimiterspace} {{v_{\mathrm{L}}}}}$ is the corresponding Doppler frequency shift of ${v_{\mathrm{r}}}$ between UEs and FAPs. Finally, the coherence interval is ${T_{\mathrm{C}}} \left({v_{\mathrm{r}}}\right) \triangleq {{9{v_{\mathrm{L}}}} \mathord{\left/
 {\vphantom {{9{v_{\mathrm{L}}}} {\left( {16\pi {f_{\mathrm{c}}}{v_{\mathrm{r}}}} \right)}}} \right.
 \kern-\nulldelimiterspace} {\left( {16\pi {f_{\mathrm{c}}}{v_{\mathrm{r}}}} \right)}}$ \cite{LiuTCOM2025}. Note that \eqref{eq_channel_time_varying} is applicable for the channels during non-initial coherence intervals of each service, while the channels during the initial coherence interval of each service are modeled as \eqref{eq_channel_quasi_static}.

\section{Preliminary Analysis}
\vspace{-0.1cm}
\subsection{QoS Requirements} \label{SubSec_QoS}
\vspace{-0.1cm}
For each UE's uplink REC service, the QoS requirements, in terms of delay and reliability, are characterized by the total delay bound in the uplink $D_{\max }^{ \mathtt{\varsigma} }$ and the requirement of the uplink overall packet loss (uOPL) probability $\varepsilon _{\max }^{ \mathtt{\varsigma} }$ \cite{LiuTCOM2025}. Thus, the QoS requirements $\left( {D_{\max }^{\mathtt{\varsigma}} ,\varepsilon _{\max }^{\mathtt{\varsigma}} } \right)$ for each UE's uplink REC service can be satisfied under the following constraints:
\begin{equation}\label{eq_QoS_req_define}
    {D^{\mathtt{\varsigma} }} \le D_{\max }^{\mathtt{\varsigma} },{\bar \varepsilon ^{  \mathtt{\varsigma} }} \le \varepsilon _{\max }^{  \mathtt{\varsigma} },
\end{equation}
where ${D^{ \mathtt{\varsigma} }}$ is the total delay in the uplink; ${{\bar \varepsilon }^{\mathtt{\varsigma}} } = \mathbb{E}_{\psi}\left[ {\varepsilon _{\scriptscriptstyle \psi } ^{\mathtt{\varsigma}} } \right]$ is the uOPL probability averaged over the small-scale and shadow fading; $\varepsilon _{\scriptscriptstyle \psi } ^{\mathtt{\varsigma}}$ is the uOPL probability under instantaneous fading; $\mathtt{\varsigma}  \in \left\{ \LSDC,\MSDC \right\}$. Services under both MSDC and LSDC exhibit diverse delay constraints, where $D_{\max }^{{\MSDC}}$ is sub-millisecond level and $D_{\max }^{{\LSDC}}$ is millisecond level or higher \cite{Dong2021TWC}.  \par

According to the 3rd Generation Partnership Project (3GPP) specifications, the total delay in the uplink includes, but is not limited to, the transmission delay, queueing delay, propagation delay, as well as coding and processing delay \cite{3gpp_22_261_TS}. This paper focuses on managing spatial, frequency, and temporal resources, which pertain to the transmission delay and queueing delay \cite{LiuTCOM2025, Liu2025iotj}. Therefore, this paper considers that the total uplink delay ${D^{ \mathtt{\varsigma} }}$ consists of the uplink transmission delay ${D^{\mathrm{u}}}$ and queueing delay ${D^{{\mathrm{q,}}\mathtt{\varsigma} }}$, i.e., ${D^{ \mathtt{\varsigma} }} = {D^{\mathrm{u}}} + {D^{{\mathrm{q,}}\mathtt{\varsigma} }}$. To satisfy delay constraints, we employ a short frame structure \cite{LiuTCOM2025, Liu2025iotj}. The frame duration ${T_{\mathrm{f}}}$ equals the transmission time interval (TTI), which is the minimal time granularity of the network. Each frame can include data transmission part with duration $T_{\mathrm{f}}^{\left( {\mathrm{d}} \right)}$ and control signaling part with duration $T_{\mathrm{f}}^{\left( {\mathrm{c}} \right)}$. In this paper, we design the control signaling as orthogonal pilots to estimate channel state information (CSI) \cite{LiuTCOM2025, Liu2025iotj}. Each uplink transmission takes one frame. The queueing delay is bounded as $D_{\max }^{{\mathrm{q,}}\mathtt{\varsigma} } \buildrel \Delta \over = D_{\max }^{ \mathtt{\varsigma} } - {D^{\mathrm{u}}}$.

According to \cite{LiuTCOM2025, Liu2025iotj}, we consider that the transmitted packet will be discarded if any decoding error occurs at the receiver. Moreover, if the queueing delay exceeds its bound, the transmitted packet will be discarded. As a result, the decoding error probability and queueing delay violation probability should be considered in the uOPL probability ${\varepsilon_{\scriptscriptstyle \psi } ^{  \mathtt{\varsigma} }}$ to effectively reflect reliability. \par

\subsection{Network Availability}
The NA is defined as the probability that the QoS requirements, in terms of delay and reliability, for each UE's service are satisfied \cite{LiuTCOM2025}. Thus, the NA can be expressed as 
\begin{equation}\label{eq_NA_define}
   {\eta  } \triangleq \frac{1}{M}{\sum\nolimits_{m = 1}^M  {\mathbbm{1}\left[ {{{\left( {{D^{\mathtt{\varsigma} }} \le D_{\max }^{\mathtt{\varsigma} },{\bar \varepsilon ^{\mathtt{\varsigma} }} \le \varepsilon _{\max }^{  \mathtt{\varsigma} }} \right)}_m}} \right]} },
\end{equation}
where ${{{\left( {{D^{\mathtt{\varsigma} }} \!\le\! D_{\max }^{\mathtt{\varsigma} },{\bar \varepsilon ^{\mathtt{\varsigma} }} \!\le\! \varepsilon _{\max }^{  \mathtt{\varsigma} }} \right)}_m}}$ is an event that the delay and reliability requirements for the $m$th UE's service are satisfied. \par

Given a target NA ${\eta _{\max }}$, achieving $\left( {{\eta } \ge {\eta _{\max }},{\eta _{\max }}  \to 1} \right)$ indicates that each UE's REC service is ensured. However, the NA definition in \eqref{eq_NA_define} fails to intuitively reveal the impact of heterogeneity that stems from diverse delay constraints and UE mobility. Moreover, the NA evaluation in aerial networks must account for both temporal and spatial dimensions due to FAP movement \cite{LiuTCOM2025}. Consequently, the NA analysis in AHetNets becomes highly complex. Inspired by \cite{Imran2024Proc}, we consider a worst-case scenario to explore the LB on NA, where all UEs are activated and each transmission distance from UEs to FAPs is the farthest, i.e., $d = {\tilde d}$. As a result, the NA evaluation for services with identical QoS requirements of UEs with same mobility is simplified because the spatial and temporal dimensions can be omitted, which facilitates the NA analysis in AHetNets. This simplification is articulated in \textit{Lemma} \ref{Lemma_NA_max_evaluation}.
\begin{Lemma}\label{Lemma_NA_max_evaluation}
    The NA for services with identical QoS requirements $\left( {D_{\max }^{\mathtt{\varsigma}} ,\varepsilon _{\max }^{\mathtt{\varsigma}} } \right)$ of UEs with same mobility can be lower bounded by ${{\mathbb{P}_\psi }\left( {{D^{\mathtt{\varsigma}} } \le D_{\max }^{\mathtt{\varsigma}} , \varepsilon _{\scriptscriptstyle \psi } ^{\mathtt{\varsigma}} \le \varepsilon _{\max }^{\mathtt{\varsigma}} } \right) }$ of any UE under the condition that ${d = {\tilde d}}$.
\end{Lemma}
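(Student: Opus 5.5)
We aim to show that the indicator average in \eqref{eq_NA_define}, restricted to UEs sharing the same QoS pair $\left( {D_{\max }^{\mathtt{\varsigma}} ,\varepsilon _{\max }^{\mathtt{\varsigma}} } \right)$ and the same relative velocity $v_{\mathrm{r}}$, is bounded below by $\mathbb{P}_\psi\left( {{D^{\mathtt{\varsigma}} } \le D_{\max }^{\mathtt{\varsigma}} , \varepsilon _{\scriptscriptstyle \psi } ^{\mathtt{\varsigma}} \le \varepsilon _{\max }^{\mathtt{\varsigma}} } \right)$ evaluated at $d=\tilde d$. The argument has three steps: a monotonicity step in the distance, a worst-case step in activity, and a step converting the empirical average into a probability.

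\textbf{Monotonicity in distance.} For a fixed service class and mobility, $\beta(d)=\mathcal{A}d^{-2}$ is decreasing in $d$. Every UE–FAP distance over any slot $i$ and any UE position in $\mathcal{A}\times\mathcal{H}$ satisfies $d\le\tilde d$, because $\tilde d_{\mathrm{H}}$ and $\tilde h$ bound the horizontal and vertical offsets. The received SINR after CoMP processing and channel estimation is increasing in each $\beta$ of the desired UE. Interference terms can only be worse if we also take all UEs active and place them at the worst geometry; I would argue this by taking the worst case uniformly, so that $\tilde d$ is used for the desired link and full activation for the interferers. The time-varying case works the same way, since $e_{\mathrm{c}}$ depends only on $v_{\mathrm{r}}$, which is common within the class, and the AR innovation variance scales with $\beta$. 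The achievable finite-blocklength rate, and hence the service rate, is monotone in SINR. It follows that both the decoding error probability and the queueing-delay violation probability (through the effective-capacity/queue bound) are nonincreasing as $\beta$ grows. Therefore, for each realization of $\psi$, the uOPL $\varepsilon_\psi^{\mathtt{\varsigma}}$ at the actual distance is at most its value at $\tilde d$. The delay $D^{\mathtt{\varsigma}}=D^{\mathrm{u}}+D^{\mathrm{q},\mathtt{\varsigma}}$ is handled the same way: $D^{\mathrm{u}}$ is one frame and does not depend on $d$, while the queueing bound is built into $\varepsilon_\psi$.

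\textbf{From the empirical average to a probability.} Under the worst case, all UEs in the class see statistically identical conditions: the same $\tilde d$, the same $v_{\mathrm{r}}$, and the same QoS. The spatial and temporal (slot) dimensions therefore drop out, and each UE's event is dominated by the same worst-case event. Interpreting the fraction of UEs satisfying the requirement through the fading distribution, i.i.d.\ across UEs and coherence intervals, gives
\begin{equation*}
\eta \ge \mathbb{P}_\psi\left( D^{\mathtt{\varsigma}}\le D_{\max}^{\mathtt{\varsigma}},\ \varepsilon_\psi^{\mathtt{\varsigma}}\le\varepsilon_{\max}^{\mathtt{\varsigma}} \,\middle|\, d=\tilde d\right),
\end{equation*}
which is the same value for any UE. Here I would use the expectation $\mathbb{E}$ of the indicator, i.e.\ the NA averaged over fading. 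The step that replaces $\bar\varepsilon^{\mathtt{\varsigma}}\le\varepsilon_{\max}^{\mathtt{\varsigma}}$ with the per-realization condition $\varepsilon_\psi^{\mathtt{\varsigma}}\le\varepsilon_{\max}^{\mathtt{\varsigma}}$ needs care. I would read availability per fading realization, or in other words treat the indicator as evaluated on the instantaneous uOPL and averaged over $\psi$, as in \cite{LiuTCOM2025}.

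\textbf{Main obstacle.} The hardest part is the monotonicity claim for the interference-limited SINR under URA with pilot contamination. Raising every UE's distance to $\tilde d$ also weakens the interferers, so the comparison is not automatically one-directional. I would first try to bound the SINR from below with a use-and-then-forget-type expression in which interference is upper-bounded using the largest path loss $\beta(d_{\min})$, or by all-UE activation with normalized power. If that proves too loose, I would fall back on power control that equalizes the received power. In either case the desired-link term is then monotone and the interference term is bounded uniformly, which makes the worst case at $\tilde d$ a valid lower bound.
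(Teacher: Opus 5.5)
Your proposal is correct and follows the paper's route. The paper's proof is the one-line observation, taken from \cite{LiuTCOM2025}, that QoS satisfaction degrades monotonically as $d$ grows, so the worst case $d=\tilde d$ with all UEs active gives the bound; your steps unpack exactly this, including the per-realization reading of the indicator via $\varepsilon_{\psi}^{\mathtt{\varsigma}}$ instead of $\bar\varepsilon^{\mathtt{\varsigma}}$, which the paper also needs but uses only implicitly. Your ``main obstacle'' is milder than you expect, since pilots within a group are orthogonal ($\xi\ge K$, hence no pilot contamination), ZF nulls intra-group interference, and each interferer's residual estimation-error leakage satisfies $\rho\beta/(\rho_{\mathrm{p}}\beta+1)\le 1/\xi$ for every $\beta$, which is precisely why $\hat\rho(K,\xi)=\rho(K)/(K/\xi+1)$ in Lemma~\ref{Lemma_PPSNR} does not depend on the interferers' distances.
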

\begin{proof}
    The LB on NA is due to the fact that NA will be degraded as $d$ increases \cite{LiuTCOM2025}.
\end{proof}

\subsection{URA-based CoMP Packet Delivery Mechanism}\label{Sec_Packet_Delivery_Mechanism}
In the URA scheme, $M$ UEs are partitioned into $K$ proximity-based clusters, each containing at most $N = \left\lceil {{M \mathord{\left/
{\vphantom {M K}} \right.
\kern-\nulldelimiterspace} K}} \right\rceil $ UEs. This strategy is independent of the delay constraints, UE categories, and mobility. The total bandwidth ${B^{{\mathrm{tot}}}}$ is evenly divided into $N$ orthogonal subchannels, where each UE occupies one subchannel. The per-subchannel bandwidth obeys ${B_0} \le B = {{N}_0}{B_0} \le {B_{\mathrm{C}}}$ \cite{Liu2025iotj,LiuTCOM2025}, where ${B_0}$ is the orthogonal subcarrier spacing, $1 \le {N_0} = \left\lfloor {{{{B^{{\mathrm{tot}}}}} \mathord{\left/
{\vphantom {{{B^{{\mathrm{tot}}}}} {\left( {{B_0}N} \right)}}} \right.
\kern-\nulldelimiterspace} {\left( {{B_0}N} \right)}}} \right\rfloor  \le \left\lfloor {{{{B_{\mathrm{C}}}} \mathord{\left/
{\vphantom {{{B_{\mathrm{C}}}} {{B_0}}}} \right.
\kern-\nulldelimiterspace} {{B_0}}}} \right\rfloor $ is the number of orthogonal subcarriers in each subchannel, while ${B_{\mathrm{C}}}$ is the coherence bandwidth. Thus, the number of orthogonal subchannels is constrained by
\begin{equation}\label{eq_constraint_N}
     {N_{\mathrm{I} }} = \left\lfloor {\frac{{{B^{{\mathrm{tot}}}}}}{{{B_{\mathrm{C}}}}}} \right\rfloor  \le N \le {N_{\mathrm{II} }} = \left\lfloor {\frac{{{B^{{\mathrm{tot}}}}}}{{{B_0}}}} \right\rfloor .
\end{equation}
The UEs within a cluster are interference-free by availing of orthogonal subchannels. Across clusters, UEs sharing time-frequency resources form a UE group whose mutual interference is managed by FAP CoMP. There are $N$ UE groups. Within a group, UEs are referred to as scheduled UEs. Due to $N = \left\lceil {{M \mathord{\left/
{\vphantom {M K}} \right.
\kern-\nulldelimiterspace} K}} \right\rceil $ and the constraint in \eqref{eq_constraint_N}, there are at most $K$ scheduled UEs in each group and $K$ is constrained by
\begin{equation}\label{eq_constraint_K}
     {K_{\mathrm{I} }} \! = \! \max \left\{ {1,\left\lceil {\frac{M}{{{N_{\mathrm{II} }}}}} \right\rceil } \right\} \le K\mathop  \le \limits^{\left( a \right)} \min \left\{ {L,\left\lceil {\frac{M}{{{N_{\mathrm{I} }}}}} \right\rceil } \right\} \! = \! {K_{\mathrm{II} }},
\end{equation}
where step $(a)$ in \eqref{eq_constraint_K} is because $K \le L$ and reflects the condition for exploiting spatial diversity and multiplexing gains through FAP CoMP \cite{LiuTCOM2025}. We denote the $k$th scheduled UE in the $n$th group as UE$_{n,k}$, $\forall n \in \left[ {1,N} \right]$ and $\forall k \in \left[ {1,K} \right]$. Under constraint \eqref{eq_constraint_K}, the bandwidth allocated to each UE is $B \left( K \right) = \left\lfloor {{{{B^{{\mathrm{tot}}}}} \mathord{\left/
 {\vphantom {{{B^{{\mathrm{tot}}}}} {\left( {{B_0}\left\lceil {{M \mathord{\left/
 {\vphantom {M K}} \right.
 \kern-\nulldelimiterspace} K}} \right\rceil } \right)}}} \right.
 \kern-\nulldelimiterspace} {\left( {{B_0}\left\lceil {{M \mathord{\left/
 {\vphantom {M K}} \right.
 \kern-\nulldelimiterspace} K}} \right\rceil } \right)}}} \right\rfloor {B_0}$.  \par

Based on the aforementioned unified bandwidth allocation, UEs transmit signals that are received by all FAPs. Then, all FAPs forward the received signals to the CP for centralized signal processing, which includes CSI estimation based on the minimum mean-square error estimation method with orthogonal pilots and data detection based on zero-forcing (ZF) detection method under imperfect CSI \cite{LiuTCOM2025}. Centralized signal processing is implemented at every time slot with duration ${T_{\mathrm{S}}} \!=\! {T_0}$, where ${T_0}$ is the greatest common divisor across diverse $D_{\max }^{ \mathtt{\varsigma} }$ and ${T_{\mathrm{C}}} \left({v_{\mathrm{r}}}\right)$. One ${T_{\mathrm{C}}} \left({v_{\mathrm{r}}}\right)$ contains $\Ns \!=\! {{T_{\mathrm{C}}} \left({v_{\mathrm{r}}}\right) \mathord{\left/
{\vphantom {{{T_{\mathrm{C}}} \left({v_{\mathrm{r}}}\right)} {{T_{\mathrm{S}}}}}} \right.
\kern-\nulldelimiterspace} {{T_{\mathrm{S}}}}}$ time slots and ${T_{\mathrm{S}}}$ comprises $\Nf \!=\! {{{T_{\mathrm{S}}}} \mathord{\left/
{\vphantom {{{T_{\mathrm{S}}}} {{T_{\mathrm{f}}}}}} \right.
\kern-\nulldelimiterspace} {{T_{\mathrm{f}}}}}$ frames. To ensure accurate CSI estimation under service burstiness and diverse delay constraints, all UEs transmit pilots in each frame. If no service arrives, the data part of the frame is left empty, while pilot transmission is sustained. Thus, the pilot length $\xi $ for accurate CSI estimation is constrained by \cite{Mohammadi2024Proc}
\begin{equation}\label{eq_constraint_LoP}
     K \le \xi  = B\left( K \right)\Nf T_{\mathrm{f}}^{\left( {\mathrm{c}} \right)} \le {\xi _{\max }} = B\left( K \right)\Nf{T_{\mathrm{f}}}-1,
\end{equation}
where ${\xi _{\max }}$ is the maximum pilot length; $B\left( K \right)\Nf{T_{\mathrm{f}}}$ are the time-frequency resources for each service within ${T_{\mathrm{S}}}$. \par

Based on \cite{LiuTCOM2025}, $g_{i,n,{l},k}$, $\beta _{i,n,l,k}$, and ${\psi _{i,n,l,k}}$ denote the channel, the path loss, and the small-scale and shadowed fading from UE$_{n,k}$ to the ${l}$th FAP during time slot $i$, respectively; $\rho _{\mathrm{p}}( {K,\xi } )  \! = \! {\rho( {K } )  }{\xi  }$ is the pilot sequence power; $\rho ( K ) \! = \! {{\vartheta {p_{\mathrm{t}}}} \mathord{\left/
 {\vphantom {{\vartheta {p_{\mathrm{t}}}} {( {B( K ){{\mathcal N}}} )}}} \right.
 \kern-\nulldelimiterspace} {( {B\left( K \right){{\mathcal N}}} )}}$ and ${p_{\mathrm{t}}}$ are the transmit signal-to-noise ratio (SNR) and transmit power of each UE, respectively; ${{\mathcal N}}$ is the noise power spectral density of the additive white Gaussian noise; $\vartheta \in ( {0,1} )$ is the SNR loss due to wireless fronthaul. 

\begin{Lemma}\label{Lemma_PPSNR}
    Under the URA and the worst-case scenario with $d = {\tilde d}$, the LB on post-processing SNR of UE$_{n,k}$ in FAP CoMP-enabled AHetNets during time slot $i$ can be given by 
\begin{equation}\label{eq_PPSNR_UL_LB}
    \hat \gamma _{i,n,k}^{{\mathrm{ZF}},{\mathrm{LB}}}\left( {K,\xi } \right) = \hat \rho \left( {K,\xi } \right){\tilde \beta}\sum\nolimits_{l = 1}^{L - K + 1} {{{| {{{\hat \psi }_{i,n,l,k}}} |}^2}} ,
\end{equation}
where $\hat \rho \left( {K,\xi } \right) \! = \! {{\rho \left( K \right)} \mathord{\left/
 {\vphantom {{\rho \left( K \right)} {\left( {{K \mathord{\left/
 {\vphantom {K \xi }} \right.
 \kern-\nulldelimiterspace} \xi } + 1} \right)}}} \right.
 \kern-\nulldelimiterspace} {\left( {{K \mathord{\left/
 {\vphantom {K \xi }} \right.
 \kern-\nulldelimiterspace} \xi } + 1} \right)}}$; ${\tilde \beta} \!=\! \beta ( {{\tilde d}} )$ is the path loss under the worst-case scenario; ${{\hat \psi }_{i,n,l,k}} \!=\! {\psi _{i,n,l,k}} - {{\hat w}_{i,n,l,k}}$ is the estimated small-scale and shadowed fading from UE$_{n,k}$ to the ${l}$th FAP during time slot $i$; ${{\hat w}_{i,n,l,k}} \!\sim\! \mathcal{C}\mathcal{N}( {0,{1 \mathord{/
 {\vphantom {1 {( {{\rho _{\mathrm{p}}}( {K,\xi } ){\tilde \beta} + 1} )}}} 
 \kern-\nulldelimiterspace} {( {{\rho _{\mathrm{p}}}( {K,\xi } ){\tilde \beta} + 1} )}}} )$ denotes the channel estimation error of ${\psi _{i,n,l,k}}$ under the worst-case scenario.
\end{Lemma}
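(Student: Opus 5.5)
The plan is to write out the uplink signal model at the CP during time slot $i$ for group $n$ and derive the ZF post-processing SNR under imperfect CSI. The core tool is the classical ZF/MRC bound: the effective channel gain after zero-forcing $K$ users with $L$ receive antennas is a sum of $L-K+1$ squared magnitudes of estimated fading coefficients. We then replace every path loss by the worst-case value $\tilde\beta$ using Lemma~\ref{Lemma_NA_max_evaluation}.

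\textbf{Channel estimation.} In group $n$ the $K$ scheduled UEs use orthogonal pilots of length $\xi\ge K$ with total pilot energy $\rho_{\mathrm{p}}(K,\xi)=\rho(K)\xi$, as required by \eqref{eq_constraint_LoP}. After despreading, the observation of $\psi_{i,n,l,k}$ is $\sqrt{\rho_{\mathrm{p}}\beta}\,\psi_{i,n,l,k}$ plus unit-variance noise. Treating the fading as having unit power, the LMMSE estimator yields an orthogonal decomposition $\psi=\hat\psi+\hat w$. The error variance is $1/(\rho_{\mathrm{p}}\beta+1)$, and setting $\beta=\tilde\beta$ gives the error law stated in the lemma.

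For time-varying UEs we use the AR model \eqref{eq_channel_time_varying}. Since pilots are sent in every frame, each slot's estimate uses that slot's pilot, so the same single-shot LMMSE form applies slot by slot. This gives a bound without exploiting the correlation $e_{\mathrm{c}}$, which is consistent with the statement being a lower bound.

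\textbf{Data detection.} The received data vector across $L$ FAPs is $\mathbf y=\sqrt{\rho(K)}\sum_k \mathbf g_k x_k+\mathbf n$. Substituting $\mathbf g_k=\sqrt{\beta}(\hat{\boldsymbol\psi}_k+\hat{\mathbf w}_k)$, we apply the ZF matrix built from the estimates $\sqrt{\tilde\beta}\hat{\boldsymbol\psi}_k$. The treat-estimation-error-as-noise lower bound gives
\[
\gamma_{k}\ge \frac{\rho(K)\tilde\beta}{\rho(K)\tilde\beta\sum_{j}\frac{1}{\rho_{\mathrm{p}}\tilde\beta+1}+1}\cdot\frac{1}{[(\hat{\boldsymbol\Psi}^H\hat{\boldsymbol\Psi})^{-1}]_{kk}}.
\]
The effective noise-plus-leakage factor can be bounded as follows. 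Using $\sum_j 1/(\rho_{\mathrm{p}}\tilde\beta+1)\le K/(\rho(K)\xi\tilde\beta)$, the denominator is at most $K/\xi+1$. This produces exactly $\hat\rho(K,\xi)=\rho(K)/(K/\xi+1)$.

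\textbf{ZF gain term (main obstacle).} The remaining step is to lower bound $1/[(\hat{\boldsymbol\Psi}^H\hat{\boldsymbol\Psi})^{-1}]_{kk}$ by $\sum_{l=1}^{L-K+1}|\hat\psi_{i,n,l,k}|^2$. This quantity is the squared norm of the projection of $\hat{\boldsymbol\psi}_k$ onto the orthogonal complement of the other $K-1$ estimated channels.

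For Gaussian channels this projection is distributed as a sum of $L-K+1$ i.i.d. squared entries. That fact is an equality in distribution, not pathwise, and it is what the cited framework \cite{LiuTCOM2025} uses. Under $\kappa$-$\mu$ shadowed fading the exact claim is less clean.

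My first attempt is therefore the distributional route. Since NA in Lemma~\ref{Lemma_NA_max_evaluation} depends only on the distribution of the SNR, it suffices to show stochastic dominance. I would rotate coordinates by a unitary that maps the span of the interferers' channels onto the last $K-1$ coordinates. Then I would invoke the rotational invariance inherited from \cite{LiuTCOM2025}, which lets the projected vector be represented by $L-K+1$ fading coefficients of UE$_{n,k}$.

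If that invariance fails for $\kappa$-$\mu$ shadowed fading, the fallback is to adopt the approximation used in \cite{LiuTCOM2025} and state it as the model assumption behind \eqref{eq_PPSNR_UL_LB}.
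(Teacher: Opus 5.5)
Once you take your fallback, your proposal is the paper's proof. The paper simply sets $\beta_{i,n,l,k}=\tilde\beta$ in the post-processing SNR lower bound of \cite{LiuTCOM2025}. That substitution is the hypothesis $d=\tilde d$ itself, so \textit{Lemma}~\ref{Lemma_NA_max_evaluation} is not needed here.

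What you add is a reconstruction of that bound, and the parts you actually derive are correct:
\begin{itemize}
\item the LMMSE error variance $1/(\rho_{\mathrm{p}}\tilde\beta+1)$;
\item the estimation-error-as-noise step, where $K\rho(K)\tilde\beta/(\rho(K)\xi\tilde\beta+1)<K/\xi$ yields exactly $\hat\rho(K,\xi)=\rho(K)/(K/\xi+1)$.
\end{itemize}
This is worth keeping, because it explains why $\hat\rho$ does not depend on $\tilde\beta$, which the paper never shows. One caveat: the conditional-variance step uses independence of $\hat{\mathbf w}$ and $\hat{\boldsymbol\Psi}$. That is the Gaussian-error model the lemma postulates, not a consequence of LMMSE estimation under $\kappa$-$\mu$ shadowed fading.

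For the ZF gain term, your ``first attempt'' cannot succeed, so do not present it as an argument.
\begin{itemize}
\item \emph{No pathwise inequality exists.} Take $L=K=2$, $\hat{\boldsymbol\psi}_j=(1,0)^T$ and $\hat{\boldsymbol\psi}_k=(1,\epsilon)^T$. Then $1/[(\hat{\boldsymbol\Psi}^H\hat{\boldsymbol\Psi})^{-1}]_{kk}=\epsilon^2$, while the sum in \eqref{eq_PPSNR_UL_LB} equals $|\hat\psi_{1,k}|^2=1$.
\item \emph{The distributional identity also fails.} It needs unitary invariance of $\hat{\boldsymbol\psi}_k$, which a vector with i.i.d.\ non-Gaussian entries never has (Herschel--Maxwell). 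Concretely, take the Nakagami-$3$ case ($\kappa\to0$, $\mu=3$) used in the simulations, with $L=K=2$ and entries normalized to unit power. Let $u$ be the unit vector spanning the ZF null space. Then $\mathbb{E}|u^H\hat{\boldsymbol\psi}_k|^4=\mu_4+(4-2\mu_4)|u_1|^2|u_2|^2$, where $\mu_4$ is the normalized fourth moment of the entries. Here $\mu_4<2$, so this differs from $\mathbb{E}|\hat\psi_{1,k}|^4=\mu_4$ for every non-coordinate $u$.
\end{itemize}
Hence the $(L-K+1)$-term form is a modeling result inherited from \cite{LiuTCOM2025}. Your fallback is therefore not a contingency but the only available closure, and it is exactly the step the paper takes. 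With that step, your proposal is complete: it is the paper's argument plus a correct derivation of the $\hat\rho$ and error-variance factors.
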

\vspace{-0.4cm}
\begin{proof}
    By substituting ${\beta _{i,n,l,k}}$ = $\beta ( {{\tilde d}} )$ into the LB on post-processing SNR in \cite{LiuTCOM2025}, \eqref{eq_PPSNR_UL_LB} can be obtained.
\end{proof}

\section{NA Analysis and Enhancement}\label{Sec_NA_analysis_enhancement}
\vspace{-0.12cm}
\subsection{NA Analysis}
\vspace{-0.12cm}

From Section \ref{SubSec_QoS}, the delay constraints $D_{\max }^{ \mathtt{\varsigma} }$ can be equivalent to the queueing delay constraints due to $D_{\max }^{{\mathrm{q,}}\mathtt{\varsigma} } \triangleq D_{\max }^{ \mathtt{\varsigma} } - {D^{\mathrm{u}}}$. Thus, we consider utilizing effective bandwidth (EB) to analyze the LB on NA, since EB is a principal tool for representing queueing delay constraints \cite{LiuTCOM2025}. Specifically, EB is defined as the minimal constant service rate that is required to satisfy $( {D_{\max }^{\mathrm{q,}\mathtt{\varsigma}},{\varepsilon_{\max } ^{\mathrm{q,}\mathtt{\varsigma}}}})$ for a stochastic arrival process, where $\varepsilon _{\max }^{{\mathrm{q},\mathtt{\varsigma}}}$ is the requirement of queueing delay violation probability. For stochastic arrival process $a^{\mathtt{\varsigma}}\left( t \right)$ (in bits/s), the formal definition of EB $E_{\mathrm{B}}^{\mathtt{\varsigma}}$ (in packets/slot) is given by \cite{She2021Proc} 
\begin{equation}\label{eq_EB_def}
    E_{\mathrm{B}}^{\mathtt{\varsigma}} \! \left( D_{\max }^{{\mathrm{q,}\varsigma} } \right) \! = \! \!\mathop {\lim }\limits_{t \to \infty } \! \frac{{{T_{\mathrm{S}}}}}{{{\varpi ^{\mathtt{\varsigma}}}{\Theta ^{\mathtt{\varsigma}} }t}}\!\ln \mathbb{E}_{a^{\mathtt{\varsigma}}}\!\!\Big[ {\exp \!\Big( {{\Theta ^{\mathtt{\varsigma}} }\!\int_0^t {{a^{\mathtt{\varsigma}} }\left( x \right)dx}\! } \Big)\!} \Big]\!,
\end{equation}
where ${\varpi ^{\mathtt{\varsigma}}}$, $\mathtt{\varsigma}  \in \left\{ \LSDC, \MSDC \right\}$ are the packet sizes (in bits) of the REC services under LSDC and MSDC, respectively; ${\Theta ^{\mathtt{\varsigma}} }$ is the QoS exponent that satisfies 
\begin{equation}\label{eq_EB_def_approx}
    \mathbb{P}_{a^{\mathtt{\varsigma}}}\left( {D ^{{\mathrm{q,}}\mathtt{\varsigma} } \ge D_{\max }^{{\mathrm{q,}\varsigma} }} \right) \approx \exp \left( { - \Theta ^{\mathtt{\varsigma}} E_{\mathrm{B}}^{\mathtt{\varsigma}} D_{\max }^{{\mathrm{q,}\varsigma} }} \right) = {\varepsilon ^{{\mathrm{q,}}\mathtt{\varsigma} }},
\end{equation}
where ${\varepsilon ^{{\mathrm{q,}}\mathtt{\varsigma} }}$ is the queueing delay violation probability. If the service rate ${R^{\mathtt{\varsigma}} }$ is not less than $E_{\mathrm{B}}^{\mathtt{\varsigma}} $, then $\left( {D_{\max }^{{\mathrm{q,}\varsigma} },{\varepsilon_{\max } ^{{\mathrm{q,}}\mathtt{\varsigma} }}} \right)$ can be satisfied \cite{She2021Proc, LiuTCOM2025}, i.e., $\left( {{D^{{\mathrm{q,}}{\mathtt{\varsigma}} }} \le D_{\max }^{{\mathrm{q,}}{\mathtt{\varsigma}} },{\varepsilon ^{{\mathrm{q,}}{\mathtt{\varsigma}} }} \le \varepsilon _{\max }^{{\mathrm{q,}}{\mathtt{\varsigma}} }} \right)$. \par

Hence, based on EB and \textit{Lemmas} \ref{Lemma_NA_max_evaluation} $\sim$ \ref{Lemma_PPSNR}, we derive the expression for the LB on NA of AHetNets in \textit{Proposition} \ref{Proposition_NA_LB_AHetNets}, and analyze the impact of heterogeneity on the LB in \textit{Result} \ref{Result_impact_heterogeneity}. Especially, the degree of heterogeneity $U$ is characterized by the number of delay constraints and mobility compositions. Under a composition $\left({D_{\max }^{ \mathtt{\varsigma} }}, {{v_{\mathrm{r}}}}\right)$, the number of coherence intervals is ${\NC}\left( {D_{\max }^{\mathtt{\varsigma}} ,{v_{\mathrm{r}}}} \right) = \left\lceil {{{D_{\max }^{\mathtt{\varsigma}} } \mathord{\left/
 {\vphantom {{D_{\max }^{\mathtt{\varsigma}} } {{T_{\mathrm{C}}}\left( {{v_{\mathrm{r}}}} \right)}}} \right.
 \kern-\nulldelimiterspace} {{T_{\mathrm{C}}}\left( {{v_{\mathrm{r}}}} \right)}}} \right\rceil $. As mentioned in Section \ref{Sec_Packet_Delivery_Mechanism}, ${T_{\mathrm{S}}}$ is the greatest common divisor across diverse $D_{\max }^{ \mathtt{\varsigma} }$ and ${{T_{\mathrm{C}}}\left( {{v_{\mathrm{r}}}} \right)}$. Thus, ${T_{\mathrm{S}}}$ and $\Nf$ are functions of $U$, where ${T_{\mathrm{S}}} \left( U \right)$ and $\Nf \left( U \right)$ will reduce as $U$ increases due to more diverse delay constraints and UE mobility.
\begin{figure*}[t]
\centering
\setlength{\abovecaptionskip}{-0.15cm}
\begin{equation}\label{eq_NA_all_REC}
\begin{aligned} 
   \eta _{{\mathrm{LB}}}^{\mathrm{H}}\left( {U,K,\xi } \right)  
   = & {\mathbb{P}_\psi }\left( {\hat \gamma _{i,n,k}^{{\mathrm{ZF,LB}}} \left( {K,\xi } \right) \ge \max \left\{ {\gamma _{{\mathrm{th}}}^{\mathtt{\varsigma}} \left( {{{\left( {D_{\max }^{\mathrm{q},\mathtt{\varsigma} }} \right)}_{{u_{\mathrm{I}}}}}},U,K,\xi \right),\forall {u_{\mathrm{I}}} \in \left[ {1,{U_{\mathrm{I}}}} \right]} \right\}} \right) \times  \\
  &  \prod\nolimits_{{u_{{\mathrm{II}}}} = 1}^{{U_{{\mathrm{II}}}}} {\prod\nolimits_{j = 1}^{{\NC}\left( {{{\left( {D_{\max }^{ \mathtt{\varsigma} },{v_{\mathrm{r}}}} \right)}_{{u_{{\mathrm{II}}}}}}} \right)} {{\mathbb{P}_\psi }\left\{ {\hat \gamma _{i + \left( {j - 1} \right)\Ns,n,k}^{{\mathrm{ZF,LB}}} \left( {K,\xi } \right) \ge \gamma _{{\mathrm{th}}}^{\mathtt{\varsigma}} \left( {{{\left( {D_{\max }^{ \mathrm{q},\mathtt{\varsigma} }} \right)}_{{u_{{\mathrm{II}}}}}}}, U,K,\xi \right)} \right\}} }, \mathtt{\varsigma}  \in \left\{ \LSDC, \MSDC \right\}.
\end{aligned}
\end{equation} 
\hrulefill
\vspace{-1.5em}
\end{figure*}
\vspace{-0.1cm}
\begin{Proposition}\label{Proposition_NA_LB_AHetNets}
The LB on NA of FAP CoMP-enabled AHetNets with heterogeneity that stems from diverse delay constraints and UE mobility is expressed in \eqref{eq_NA_all_REC} at the top of the page. Achieving $\left( \eta_{\mathrm{LB}}^{\mathrm{H}} \ge \eta_{\max}, \eta_{\max} \to 1 \right)$ implies that the target NA (i.e., $\eta_{\max} \to 1$) is achieved. In \eqref{eq_NA_all_REC}, ${\left( {D_{\max }^{\mathrm{q},\mathtt{\varsigma} }} \right)_{{u_{\mathrm{I}}}}}$, $\forall {u_{\mathrm{I}}} \in \left[ {1,{U_{\mathrm{I}}}} \right]$ and ${\left( {D_{\max }^{\mathrm{q},\mathtt{\varsigma} }} \right)_{{u_{{\mathrm{II}}}}}}$, $\forall {u_{\mathrm{II}}} \in \left[ {1,{U_{\mathrm{II}}}} \right]$ denote the queueing delay constraints when $D_{\max }^{ \mathtt{\varsigma} } \le {T_{\mathrm{C}}}$ and $D_{\max }^{ \mathtt{\varsigma} } > {T_{\mathrm{C}}}$, respectively; ${\left( {D_{\max }^{\mathtt{\varsigma} }, v_{\mathrm{r}}} \right)_{{u_{{\mathrm{II}}}}}}$ denote the delay constraints and UE mobility compositions under $D_{\max }^{ \mathtt{\varsigma} } \!>\! {T_{\mathrm{C}}}$. Moreover, $U \!=\! {U_{\mathrm{I}}} \!+\! {U_{{\mathrm{II}}}}$, where ${U_{\mathrm{I}}}$ and ${U_{\mathrm{II}}}$ are the numbers of delay constraints and UE mobility compositions under $D_{\max }^{ \mathtt{\varsigma} } \le {T_{\mathrm{C}}}$ and $D_{\max }^{ \mathtt{\varsigma} } \!>\! {T_{\mathrm{C}}}$, respectively. Finally, $\gamma _{{\mathrm{th}}}^{\mathtt{\varsigma}} \left( {D_{\max }^{{\mathrm{q},\mathtt{\varsigma}} },U,K,\xi} \right)$ are required thresholds of post-processing SNR for satisfying $\left( {D_{\max }^{\mathtt{\varsigma}} ,\varepsilon _{\max }^{\mathtt{\varsigma}} } \right)$, which are expressed as
\begin{align}
   & \gamma _{{\mathrm{th}}}^{\LSDC} \big( {\cal P}^{\LSDC} \big) \!
     = \! \phi\exp \Big( {\frac{{{\varpi ^{\LSDC}}E_{\mathrm{B}}^{\LSDC}\left(D_{\max }^{{\mathrm{q},\LSDC} } \right)\ln 2}}{{B\left( K \right)\Nf\left( U \right){T_{\mathrm{f}}} - \xi}}} \Big), \label{apx_Proposition_NA_LDC_1} \\
   &  \gamma _{{\mathrm{th}}}^{\MSDC} \big( {\cal P}^{\MSDC} \big) \!  =  \notag \\
    & \exp  \Big( {\frac{{{\varpi ^{\MSDC}}E_{\mathrm{B}}^{\MSDC}\left(D_{\max }^{{\mathrm{q},\MSDC} } \right)\ln 2}}{{B\left( K \right)\Nf\left( U \right){T_{\mathrm{f}}} - \xi}} \!+\! \frac{{{f_{{Q^{ - 1}}}}\left( {\varepsilon _{\max }^{{\mathrm{u,c}}}} \right)}}{{\sqrt {B( K)T_{\mathrm{f}}^{\left( {\mathrm{d}} \right)}} }}} \Big) \! \!-\! 1, \label{apx_Proposition_NA_MDC_3}
\end{align}
where $\phi > 1$ is the SNR loss due to the finite blocklength \cite{Dong2021TWC}; ${\cal P}^{\mathtt{\varsigma}} \!=\! \left\{ {D_{\max }^{{\mathrm{q},\mathtt{\varsigma}} },U,K,\xi} \right\}$, $\mathtt{\varsigma}  \in \left\{ \LSDC, \MSDC \right\}$; $\varepsilon _{\max }^{{\mathrm{u,c}}}  \!=\! \varepsilon _{\max }^{{\MSDC}} \!- \varepsilon _{\max }^{{\mathrm{q,}\MSDC}}$ is the requirement of decoding error probability.
\end{Proposition}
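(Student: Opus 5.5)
Our plan is to start from Lemma~\ref{Lemma_NA_max_evaluation}. For each composition of delay constraint and mobility, it reduces the NA to the single-UE probability $\mathbb{P}_\psi(D^{\mathtt{\varsigma}}\le D^{\mathtt{\varsigma}}_{\max},\varepsilon^{\mathtt{\varsigma}}_\psi\le\varepsilon^{\mathtt{\varsigma}}_{\max})$ at $d=\tilde d$. Under URA every UE, whatever its category, receives the same bandwidth $B(K)$, the same pilot length $\xi$ and the same number of frames $\Nf(U)$ per slot. The post-processing SNR lower bound $\hat\gamma^{\mathrm{ZF,LB}}_{i,n,k}(K,\xi)$ of Lemma~\ref{Lemma_PPSNR} is therefore common to all compositions, and the overall LB is obtained by requiring all compositions to be served simultaneously by this common resource. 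The first task is then to convert the per-service QoS event into an SNR threshold event $\{\hat\gamma^{\mathrm{ZF,LB}}\ge\gamma^{\mathtt{\varsigma}}_{\mathrm{th}}\}$. The second task is to handle the temporal structure (one versus several coherence intervals).

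\emph{Step 1: thresholds.} We split the uOPL requirement as $\varepsilon^{\mathtt{\varsigma}}_{\max}=\varepsilon^{\mathrm{u,c}}_{\max}+\varepsilon^{\mathrm{q},\mathtt{\varsigma}}_{\max}$, using the union bound on decoding error and queueing violation. Queueing is satisfied when the service rate per slot is at least the EB, i.e. $R^{\mathtt{\varsigma}}\ge E^{\mathtt{\varsigma}}_{\mathrm B}(D^{\mathrm q,\mathtt{\varsigma}}_{\max})$, by \eqref{eq_EB_def_approx}. The number of data symbols per service per slot is $B(K)\Nf(U)T_{\mathrm f}-\xi$, since $\xi$ resource elements are spent on pilots. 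For MSDC (short packets) we use the normal approximation of the achievable rate, $\ln(1+\gamma)-f_{Q^{-1}}(\varepsilon^{\mathrm{u,c}}_{\max})/\sqrt{B(K)T^{(\mathrm d)}_{\mathrm f}}$ nats per symbol, with the channel dispersion bounded by $1$. We require the delivered bits per slot to be at least $\varpi^{\MSDC}E^{\MSDC}_{\mathrm B}$ and invert for $\gamma$, which yields \eqref{apx_Proposition_NA_MDC_3}. For LSDC the blocklength is long, so we use Shannon capacity with the finite-blocklength SNR loss $\phi$, i.e. $\log_2(1+\gamma/\phi)\approx\log_2(\gamma/\phi)$ in the high-SNR regime. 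Inverting gives \eqref{apx_Proposition_NA_LDC_1}. The parameter $\phi$ also absorbs the decoding-error budget, as in \cite{Dong2021TWC}.

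\emph{Step 2: temporal structure.} For compositions with $D^{\mathtt{\varsigma}}_{\max}\le T_{\mathrm C}$, the whole service lies in one coherence interval, so all such requirements are checked against the same fading realization. The joint event is then $\hat\gamma\ge\max_{u_{\mathrm I}}\gamma_{\mathrm{th}}$, which gives the first factor of \eqref{eq_NA_all_REC}. For compositions with $D^{\mathtt{\varsigma}}_{\max}>T_{\mathrm C}$, the service spans $\NC$ coherence intervals, spaced $\Ns$ slots apart. The service must meet its threshold in each interval, so we take the intersection over $j$ of the events at slot $i+(j-1)\Ns$.

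The main obstacle is justifying the product form, i.e. the independence across coherence intervals, across the $u_{\mathrm{II}}$ factors, and between these and the first factor. Under the time-varying AR model \eqref{eq_channel_time_varying}, fading in adjacent intervals is correlated, so exact independence fails. We would first argue that the worst case for the lower bound is obtained by treating the innovation components as independent, by conditioning sequentially on $g_{-1}$ and lower-bounding each conditional probability. Failing that, we would apply a positive-association (FKG/Harris) inequality: the events $\{\hat\gamma\ge\text{thr}\}$ are increasing in the magnitudes $|\hat\psi|$, so the probability of the intersection is at least the product of the marginal probabilities. The same monotonicity argument supplies the lower bound across different compositions evaluated on the common fading. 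Finally, since \eqref{eq_NA_all_REC} is a lower bound on $\eta$ through Lemma~\ref{Lemma_NA_max_evaluation}, $\eta^{\mathrm H}_{\mathrm{LB}}\ge\eta_{\max}$ immediately implies $\eta\ge\eta_{\max}$.
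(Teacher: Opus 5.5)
Your derivation of the thresholds matches the paper's appendix, and so does the overall max/product structure. Both use the EB criterion $R^{\mathtt{\varsigma}}\ge E_{\mathrm B}^{\mathtt{\varsigma}}$, the normal-approximation rate with $V(\gamma)=1$ for MSDC, and the $\phi$-penalized Shannon rate for LSDC. Both take $\max_{u_{\mathrm I}}$ over the compositions with $D_{\max}^{\mathtt{\varsigma}}\le T_{\mathrm C}$, and one event per coherence interval otherwise. You differ on how the product form is justified. The paper simply multiplies the probabilities: it asserts that the $D_{\max}^{\mathtt{\varsigma}}>T_{\mathrm C}$ case is independent of the $D_{\max}^{\mathtt{\varsigma}}\le T_{\mathrm C}$ case and that compositions are ``weakly coupled'', and it never addresses the AR correlation between coherence intervals. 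Your positive-association argument instead makes the product a genuine lower bound on the probability of the intersection, which is what an LB requires.

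To make that argument work, three points need attention.
\begin{itemize}
\item Harris' inequality needs independent coordinates, and FKG needs a lattice condition. Under the AR model you must therefore first show that the per-interval magnitudes are associated. Given $g_{-1}$, $|g|^2$ is a scaled noncentral $\chi^2$ variable whose noncentrality is proportional to $e_{\mathrm c}^2|g_{-1}|^2$. Hence the magnitudes form a stochastically monotone Markov chain, which is associated for any initial ($\kappa$-$\mu$ shadowed) law. Independence across FAPs and independent per-interval estimation noise preserve this, and the SNR of Lemma~\ref{Lemma_PPSNR} is increasing in these magnitudes.
\item Across compositions you need no correlation inequality at all. NA is a UE-weighted average of the per-composition NAs, so it is at least their minimum, and hence at least the product of per-composition lower bounds lying in $[0,1]$. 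This also avoids having to say what a ``common fading path'' means for UEs with different $v_{\mathrm r}$ and $\Ns$.
\item Your LSDC step is stronger than a high-SNR approximation. Since $\ln(1+\gamma/\phi)>\ln(\gamma/\phi)$, the condition $\gamma\ge\phi\exp(\cdot)$ is sufficient for $R^{\LSDC}\ge E_{\mathrm B}^{\LSDC}$, so \eqref{apx_Proposition_NA_LDC_1} is a conservative threshold. By contrast, the paper's literal substitution of $R^{\LSDC}=E_{\mathrm B}^{\LSDC}$ into its rate expression would give $\phi(\exp(\cdot)-1)$.
\end{itemize}
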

\vspace{-0.3cm}
\begin{proof}
    See Appendix \ref{appendix_Proposition_NA_LB_AHetNets}.
\end{proof}
\vspace{-0.1cm}

\vspace{-0.1cm}
\begin{Result}\label{Result_impact_heterogeneity}
From \eqref{eq_NA_all_REC}, $\eta _{{\mathrm{LB}}}^{\mathrm{H}}$ decreases under extended heterogeneity, even when the heterogeneity stems from additional REC services under LSDC or from UEs with lower mobility. The reason is that $\gamma _{{\mathrm{th}}}^{\mathtt{\varsigma}} \left( {\cal P}^{\mathtt{\varsigma}} \right)$ in \eqref{apx_Proposition_NA_LDC_1} and \eqref{apx_Proposition_NA_MDC_3} increase with extended heterogeneity, which is due to the reduction in the time-frequency resources $B\left( K \right)\Nf \left( U \right){T_{\mathrm{f}}}$ for each service with increased $U$. 
\end{Result}
\vspace{-0.1cm}


\vspace{-0.1cm}
\subsection{NA Enhancement}
Based on the insights in \textit{Result} \ref{Result_impact_heterogeneity}, it is necessary to improve resource efficiency to enhance the LB on NA for mitigating the impact of heterogeneity. Thus, we analytically obtain \textit{Property} \ref{Property_joint_K_LoP} and formulate the optimization problem in this part.

\begin{Property}\label{Property_joint_K_LoP}
Since $B \left( K \right) \!\!=\!\! \left\lfloor {{{{B^{{\mathrm{tot}}}}} \mathord{\left/
 {\vphantom {{{B^{{\mathrm{tot}}}}} {\left( {{B_0}\left\lceil {{M \mathord{\left/
 {\vphantom {M K}} \right.
 \kern-\nulldelimiterspace} K}} \right\rceil } \right)}}} \right.
 \kern-\nulldelimiterspace} {\left( {{B_0}\left\lceil {{M \mathord{\left/
 {\vphantom {M K}} \right.
 \kern-\nulldelimiterspace} K}} \right\rceil } \right)}}} \right\rfloor {B_0}$, the time-frequency resources of each service can be increased under a larger $K$. However, $\hat \gamma _{i,n,k}^{{\mathrm{ZF,LB}}} \! \left( {K,\xi } \right)$ in \eqref{eq_PPSNR_UL_LB} decreases with larger $K$ due to increased interference and reduced spatial diversity, degrading $\eta _{{\mathrm{LB}}}^{\mathrm{H}}$. Moreover, a higher $\xi$ may be required due to $\xi \!\ge\! K$ in \eqref{eq_constraint_LoP}, which shortens the time-frequency resources for data transmission, i.e., ${{B\left( K \right)\Nf\left( U \right){T_{\mathrm{f}}} - \xi}}$. Consequently, $\gamma _{{\mathrm{th}}}^{\mathtt{\varsigma}} \left( {\cal P}^{\mathtt{\varsigma}} \right)$ in \eqref{apx_Proposition_NA_LDC_1} and \eqref{apx_Proposition_NA_MDC_3} increase, further degrading $\eta _{{\mathrm{LB}}}^{\mathrm{H}}$. Therefore, $K$ and $\xi$ should be jointly optimized to improve spatial, frequency, and temporal resource efficiency.
\end{Property}
\vspace{-0.1cm}

Based on \textit{Property} \ref{Property_joint_K_LoP}, we formulate the following optimization problem to jointly optimize $K$ and $\xi $. \begin{align}\label{eq_problem_joint_K_LoP}
{\mathop {\max }\limits_{K,\xi } } \quad & \eta _{{\mathrm{LB}}}^{\mathrm{H}}\left( {U,K,\xi } \right) \\
{{\mathrm{s.t.}}}  \quad & \eqref{eq_constraint_N}, \eqref{eq_constraint_K}, \eqref{eq_constraint_LoP}. \notag
\end{align}

The optimal values of $K$ and $\xi$, denoted by $K^\star$ and $\xi^\star$, respectively, for the problem \eqref{eq_problem_joint_K_LoP}, can be obtained using the exhaustive search method. The complexity of this method is ${\cal O} \left( {\left( {{\xi _{\max }} \!+ 1} \right)\left( {{K_{{\mathrm{II}}}} -\! {K_{\mathrm{I}}} +\! 1} \right) \!-0.5 {{\left( {{K_{{\mathrm{II}}}} \!+\! {K_{\mathrm{I}}}} \right)\left( {{K_{{\mathrm{II}}}} \!-\! {K_{\mathrm{I}}} +\! 1} \right)}}} \right)$.

\section{Simulation and Numerical Results}\label{Sec_sim_num}
In this section, we validate our analysis and results by presenting numerical simulations through $10^{10}$ Monte-Carlo trials. For our simulations, we consider a disaster area with a radius of $W_{\mathrm{D}}$ = 3,000 m, where $M$ = 1,000 UEs are all activated. The FAP flight altitude is set as $h_{\mathrm{F}}$ = 200 m. The AUE flight altitude range is set as ${\cal H} = \left[ {100,400} \right]$ m. The $\kappa$-$\mu$ shadowed fading parameters are set as $\kappa \to 0$, $\mu= 3$, and $m \to \infty$ for characterizing the Nakagami-$m$ fading, which is generally used in aerial networks \cite{LiuTCOM2025}. The SNR losses due to wireless fronthaul and finite blocklength are set as $\vartheta = 0.5$ and $\phi = 1.5$, respectively. The requirement of uOPL probability is set as $\varepsilon _{\max }^{  \mathtt{\varsigma} }$ = $10^{-5}$, $\mathtt{\varsigma}  \in \left\{ \LSDC, \MSDC \right\}$, where ${\varepsilon _{\max }^{{\mathrm{u,c}}}}$ = $\varepsilon _{\max }^{\mathrm{q,}\MSDC}$ = $\frac{1}{2}\varepsilon _{\max }^{\MSDC}$ and $\varepsilon _{\max }^{\mathrm{q},\LSDC}$ = $\varepsilon _{\max }^{\LSDC}$. In disaster emergency scenarios, the total delay bound in uplink RECs under MSDC is $D_{\max }^{{\MSDC}}$ = 0.5 ms, and the relative velocity is ${v_{\mathrm{r}}}$ = 30 (m/s) \cite{LiuTCOM2025}. For uplink RECs under LSDC, there are four groups of total delay bound $D_{\max }^{{\LSDC}}$ and relative velocity ${v_{\mathrm{r}}}$, which act as four sources of heterogeneity. The details of the four sources are shown in Table \ref{four_sources}. The degree of heterogeneity $U$ = $Z$, $Z \ge 2$ indicates that the scenario contains RECs under MSDC and LSDC with $\left( {D_{\max }^{{\LSDC}}\left( 1 \right),{v_{\mathrm{r}}}\left( 1 \right)} \right) \sim \left( {D_{\max }^{{\LSDC}}\left( {Z - 1} \right),{v_{\mathrm{r}}}\left( {Z - 1} \right)} \right)$; $U$ = 1 refers to the scenario that contains RECs under MSDC. The target NA is set as ${\eta _{\max }} = 0.98$. The total bandwidth is ${{B^{{\mathrm{tot}}}}}$ = 150 MHz. We consider the service stochastic arrival process as the Poisson process, where $\theta^{\mathtt{\varsigma}}$ = $\sigma _\mathtt{\varsigma} ^2$, ${\lambda ^{\LSDC}}$ = 500 packets/s, and ${\lambda ^{\MSDC}}$ = 20 packets/s. The EB (in packets/slot) for a Poisson arrival process is $E_{\mathrm{B}}^{\mathtt{\varsigma}}  = \frac{{{\Nf}{T_{\mathrm{f}}}\ln \left( {{1 \mathord{\left/
 {\vphantom {1 {{\varepsilon_{\max} ^{{\mathrm{q,}}\mathtt{\varsigma} }}}}} \right.
 \kern-\nulldelimiterspace} {{\varepsilon_{\max} ^{{\mathrm{q,}}\mathtt{\varsigma} }}}}} \right)}}{{D_{\max }^{{\mathrm{q,}}\mathtt{\varsigma} }\ln \left[ {{{{T_{\mathrm{f}}}\ln \left( {{1 \mathord{\left/
 {\vphantom {1 {{\varepsilon_{\max} ^{{\mathrm{q,}}\mathtt{\varsigma} }}}}} \right.
 \kern-\nulldelimiterspace} {{\varepsilon_{\max} ^{{\mathrm{q,}}\mathtt{\varsigma} }}}}} \right)} \mathord{\left/
 {\vphantom {{{T_{\mathrm{f}}}\ln \left( {{1 \mathord{\left/
 {\vphantom {1 {{\varepsilon_{\max} ^{{\mathrm{q,}}\mathtt{\varsigma} }}}}} \right.
 \kern-\nulldelimiterspace} {{\varepsilon_{\max} ^{{\mathrm{q,}}\mathtt{\varsigma} }}}}} \right)} {\left( {\theta^{\mathtt{\varsigma}}D_{\max }^{{\mathrm{q,}}\mathtt{\varsigma} }} \right)}}} \right.
 \kern-\nulldelimiterspace} {\left( {\theta^{\mathtt{\varsigma}}D_{\max }^{{\mathrm{q,}}\mathtt{\varsigma} }} \right)}} + 1} \right]}}$ \cite{LiuTCOM2025}. The other parameters are listed in Table \ref{simulation_settings}.

\begin{table}[tbp]
\vspace{-0.2cm}
\setlength{\abovecaptionskip}{0cm}
\caption{Sources of Heterogeneity}\label{four_sources}
\centering
\small
\begin{tabular}{|m{18 mm}<\centering|m{10 mm}<\centering|m{11 mm}<\centering|m{11 mm}<\centering|m{10 mm}<\centering|}
\hline
Sources & S$_1$ &  S$_2$ & S$_3$ & S$_4$ \\
\hline
$D_{\max }^{{\LSDC}}$ (ms) &  55:5:105 &  55  & 55:5:105  & 55:-5:5  \\
\hline
${v_{\mathrm{r}}}$ (m/s) & 30 & 30:-2:10 & 30:-2:10 & 30:2:50  \\
\hline
\end{tabular}
\vspace{-0.3cm}
\end{table}

\begin{table}[t]
\vspace{-0.2cm}
\setlength{\abovecaptionskip}{0cm}
\caption{Simulation Parameters \cite{LiuTCOM2025,Liu2025iotj, Elhoushy2021TWC}}\label{simulation_settings}
\centering
\small
\begin{tabular}{|m{56 mm}<\centering|m{18 mm}<\centering|}
\hline
\bfseries Parameter & \bfseries Value \\
\hline
\hline
Duration of each frame ${T_{\mathrm{f}}}$ (equals to TTI) & 0.1 ms \\
\hline
Orthogonal subcarrier spacing $B_{0}$ & 15 KHz \\
\hline
Carrier frequency $f_{\mathrm{c}}$  & 2 GHz \\
\hline
Sampling time ${{t_{\mathrm{s}}}}$  & 66.66 ${\mathrm{\mu s}}$ \\
\hline
Channel coherence bandwidth ${B_{\mathrm{C}}}$ & 0.5 MHz \\
\hline
Noise power spectral density ${{\mathcal N}}$ & -174 dBm/Hz \\
\hline
Transmit power ${\rho _{\mathrm{t}}} $ & 5 dBm \\
\hline
Packet size ${\varpi ^{\LSDC}}$/${\varpi ^{\MSDC}}$ & 1000/160 bits \\
\hline
\end{tabular}
\vspace{0.2em}
\end{table}

Figure \ref{NA_vs_U} shows the impact of different sources of heterogeneity on $\eta_{\mathrm{LB}}^{\mathrm{H}}$. The results indicate that as the degree of heterogeneity $U$ increases, $\eta_{\mathrm{LB}}^{\mathrm{H}}$ experiences significant degradation, even when the heterogeneity stems from additional REC services under LSDC or from UEs with lower mobility. Notably, when heterogeneity is driven by both additional REC services under LSDC and UEs with higher mobility, as in the source of S$_4$, the degradation in $\eta_{\mathrm{LB}}^{\mathrm{H}}$ becomes super-linear. These findings validate \textit{Result} \ref{Result_impact_heterogeneity}, confirming the pronounced sensitivity of $\eta_{\mathrm{LB}}^{\mathrm{H}}$ to extended heterogeneity in AHetNets.\par

\begin{figure}[t]
\vspace{-0.3cm}
\centering
\setlength{\abovecaptionskip}{-0.15cm}
\includegraphics[width=3.6in]{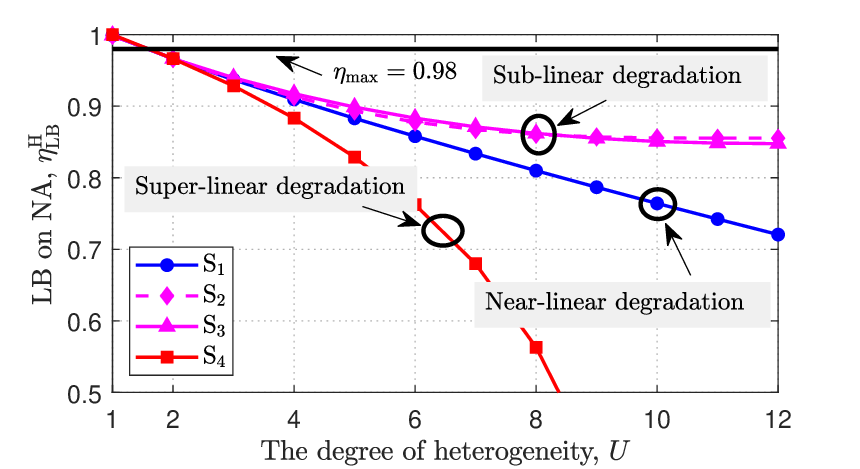}
\caption{LB on NA vs. $U$ with $L$ = 15, $K$ = 6, and $T_{\mathrm{f}}^{\left( {\mathrm{c}} \right)}$ = 0.05 ms.} 
\label{NA_vs_U}
\vspace{-0.8em}
\end{figure}

Figure \ref{NA_vs_U_diff_K_LoP} shows the impact of heterogeneity—caused by S$_4$ on $\eta _{{\mathrm{LB}}}^{\mathrm{H}}$ under various resource allocation policies. These policies include: existing \textit{(i)} fixed values for both $K$ and $\xi$, \textit{(ii)} optimization of $K$ only, \textit{(iii)} optimization of $\xi$ only, and our proposed \textit{(iv)} joint optimization of $K$ and $\xi$. In the figure, the legend ``{Baseline}" corresponds to the policy that $K = 6$ and $\xi$ is fixed, specifically with $T_{\mathrm{f}}^{(\mathrm{c})} = 0.05$ ms.  
The legend ``Opt. $K$" represents the policy that only $K$ is optimized, while $\xi$ remains fixed. Similarly, ``Opt. $\xi$" refers to optimizing $\xi$ with $K$ fixed at $6$. The legend ``Joint Opt. $K$ and $\xi$" indicates the policy that both $K$ and $\xi$ are jointly optimized. Besides, ``Max $U$" is the maximum $U$, where $\left( \eta_{\mathrm{LB}}^{\mathrm{H}} \ge \eta_{\max}, \eta_{\max} \to 1 \right)$ can be achieved. Compared to existing policies, the joint optimization of $K$ and $\xi$ significantly enhances the ability to achieve $\left( \eta_{\mathrm{LB}}^{\mathrm{H}} \ge \eta_{\max}, \eta_{\max} \to 1 \right)$ in AHetNets, especially under increased heterogeneity. This demonstrates that the proposed joint optimization policy effectively mitigates NA degradation caused by extended heterogeneity, thereby validating the effectiveness of \textit{Property} \ref{Property_joint_K_LoP} and the optimization problem in \eqref{eq_problem_joint_K_LoP} for achieving the target NA in AHetNets.

\begin{figure}[t]
\vspace{-0.2cm}
\centering
\setlength{\abovecaptionskip}{-0.15cm}
\includegraphics[width=3.6in]{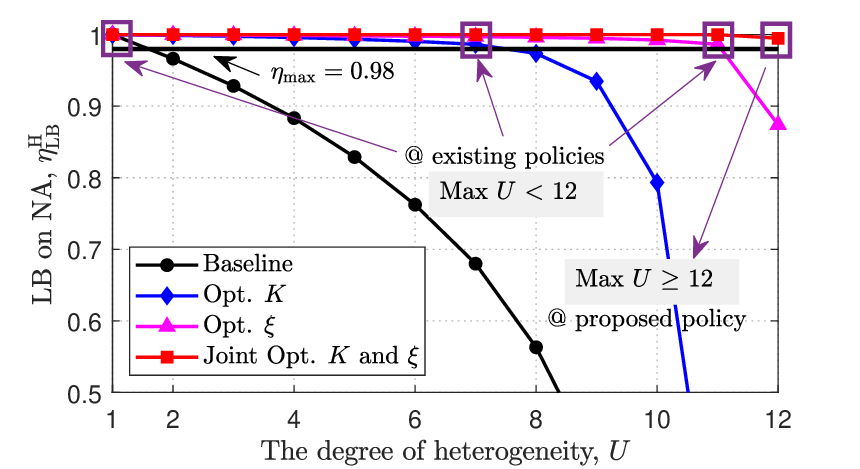}
\caption{LB on NA vs. $U$ under different $K$ and $\xi $ with $L$ = 15.} 
\label{NA_vs_U_diff_K_LoP}
\vspace{0.3em}
\end{figure}

\vspace{-0.3em}
\section{Conclusion}\label{Sec_Conclusion}
We investigated the NA of FAP CoMP-enabled AHetNets for delay-constrained REC services to UEs with varying mobility. To address resource allocation conflicts from heterogeneity in delay constraints and mobility, we employed a URA scheme. We derived expressions for the LB on NA, revealing that extended heterogeneity significantly degrades it. To mitigate this, we jointly optimized the number of UEs sharing time-frequency resources ($K$) and the pilot length ($\xi$), improving the LB. Simulations confirmed that target NA levels are achievable even under greater heterogeneity through joint optimization of $K$ and $\xi$.

\vspace{-0.3em}
\appendices
\section{Proof of Proposition \ref{Proposition_NA_LB_AHetNets}}
\label{appendix_Proposition_NA_LB_AHetNets}
\vspace{-0.6em}
Under diverse $D_{\max }^{ \mathtt{\varsigma} }$ and ${T_{\mathrm{C}}}$, it is likely to observe $D_{\max }^{ \mathtt{\varsigma} } \le {T_{\mathrm{C}}}$ and $D_{\max }^{ \mathtt{\varsigma} } > {T_{\mathrm{C}}}$, $\mathtt{\varsigma}  \in \left\{ \LSDC, \MSDC \right\}$, which result in quasi-static and time-varying channels, respectively \cite{LiuTCOM2025,Elhoushy2021TWC}.  \par

\subsubsection{NA when $D_{\max }^{{\LSDC}} > {T_{\mathrm{C}}}$}
For delay-sensitive services, even if the delay constraints are less stringent, the blocklength of channel coding is finite \cite{Dong2021TWC}. Then, with a specific SNR $\gamma$, the achievable service rate (packets/slot) under LSDC is \cite{Dong2021TWC}
\begin{equation}\label{eq_Rate_LDC}
\setlength{\abovedisplayskip}{0pt}
\setlength{\belowdisplayskip}{0pt}
{R^{{\LSDC}}} = \frac{{B\Nf T_{\mathrm{f}}^{\left( {\mathrm{d}} \right)}}}{{{\varpi ^{{\LSDC}}}\ln 2}}C\left( {\frac{\gamma }{\phi }} \right),
\end{equation}
where $C\left( \gamma  \right) = \ln \left( {1 + \gamma } \right)$ is Shannon’s capacity in the infinite blocklength regime; $BT_{\mathrm{f}}^{\left( {\mathrm{d}} \right)}$ is the blocklength of channel coding during each frame (in the number of channel uses); $B\Nf T_{\mathrm{f}}^{\left( {\mathrm{d}} \right)} = B\Nf{T_{\mathrm{f}}} - \xi $ is the time-frequency resources for data transmission of each service during each time slot.  
\par

From \cite{Dong2021TWC}, packet loss comes from queueing delay violations for the services under LSDC. Then, the QoS requirement $\left( {D_{\max }^{{\LSDC}},\varepsilon _{\max }^{{\LSDC}}} \right)$ for ensuring REC services under LSDC can be equated to $\left( {D_{\max }^{{\mathrm{q},\LSDC}},\varepsilon _{\max }^{{\mathrm{q},\LSDC}}} \right)$. In time-varying channels, $\hat \gamma _{i,n,k}^{{\mathrm{ZF,LB}}} \!\! = \! \hat \gamma _{i',n,k}^{{\mathrm{ZF,LB}}}, \forall i,i' \in \left[ {0,I} \right],\left| {i - i'} \right| \le \Ns$ and $\hat \gamma _{i,n,k}^{{\mathrm{ZF,LB}}} \!\!=\!\! \hat \gamma _{i + \Ns,n,k}^{{\mathrm{ZF,LB}}} \!\cup\! \hat \gamma _{i,n,k}^{{\mathrm{ZF,LB}}} \!\!\ne\!\! \hat \gamma _{i + \Ns,n,k}^{{\mathrm{ZF,LB}}}$. Consequently, ${R^{\LSDC}}$ varies across different ${T_{\mathrm{C}}}$ within $D_{\max }^{{\LSDC}}$ and remains constant during each ${T_{\mathrm{C}}}$. Then, $\left( {D_{\max }^{{\mathrm{q},\LSDC}},\varepsilon _{\max }^{{\mathrm{q},\LSDC}}} \right)$ can be satisfied if ${R^{\LSDC}} \ge E_{\mathrm{B}}^{\LSDC}$ can be assured during each ${T_{\mathrm{C}}}$ within $D_{\max }^{{\LSDC}}$. Thus, by substituting ${R^{\LSDC}} = E_{\mathrm{B}}^{\LSDC}$ into \eqref{eq_Rate_LDC}, we get the required threshold of post-processing SNR $\gamma _{{\mathrm{th}}}^{\LSDC}$ for satisfying $( {D_{\max }^{{\LSDC}},\varepsilon _{\max }^{{\LSDC}}} )$, which is given in \eqref{apx_Proposition_NA_LDC_1}. Based on \textit{Lemma} \ref{Lemma_NA_max_evaluation}, the LB on NA for REC services under LSDC with identical $\left( {D_{\max }^{\LSDC} ,\varepsilon _{\max }^{\LSDC} } \right)$ of UEs with same mobility when $D_{\max }^{{\LSDC}} > {T_{\mathrm{C}}}$ can be given by
\begin{equation}\label{eq_LB_NA_LDC}
\setlength{\abovedisplayskip}{0pt}
\setlength{\belowdisplayskip}{2pt}
    \hat \eta _{{\mathrm{LB}}}^{\LSDC} = \prod\nolimits_{j = 1}^{{\NC}} {{\mathbb{P}_\psi }\big( {\hat \gamma _{i + \left( {j - 1} \right)\Ns,n,k}^{{\mathrm{ZF,LB}}} \ge \gamma _{{\mathrm{th}}}^{\LSDC} } \big)}.
\end{equation}

\subsubsection{NA when $D_{\max }^{{\MSDC}} \le {T_{\mathrm{C}}}$}
When the delay constraints are more stringent, the blocklength of channel coding is significantly shorter \cite{Dong2021TWC}. As a result, decoding errors cannot be ignored under MSDC. According to finite blocklength information theory, the achievable service rate $R^{\MSDC}$ (packets/slot) of UE with a specific SNR $\gamma $ and a given decoding error probability ${{\varepsilon ^{{\mathrm{u,c}}}}}$ can be given by \cite{Dong2021TWC} 
\begin{equation}\label{eq_rate_MDC}
\setlength{\abovedisplayskip}{0pt}
\setlength{\belowdisplayskip}{0pt}
    {R^{\MSDC}} \!= \!\frac{{B\Nf{T_{\mathrm{f}}} - \xi}}{{{\varpi ^{\MSDC}}\ln 2}}\left[ {C\left( \gamma  \right) \!-\! \sqrt {\frac{{V\left( \gamma  \right)}}{{BT_{\mathrm{f}}^{\left( {\mathrm{d}} \right)}}}} {f_{{Q^{\! - 1}}}}\left( {{\varepsilon ^{{\mathrm{u}},{\mathrm{c}}}}} \right)} \right],
\end{equation}
where $V\left( \gamma  \right) \!=\! 1 \!-\! \frac{1}{{{{\left( {1 + \gamma } \right)}^2}}} \le 1$ is the channel dispersion.

From \cite{Dong2021TWC,LiuTCOM2025}, packet loss comes from decoding errors and queueing delay violations for services under MSDC. According to \cite{LiuTCOM2025}, by integrating the decoding error probability ${\varepsilon ^{{\mathrm{u,c}}}}$ and the queueing delay violation probability ${\varepsilon ^{\mathrm{q,}\MSDC}}$, the uOPL probability is given by ${\varepsilon_{\scriptscriptstyle \psi } ^{{\MSDC}}} \le {\varepsilon ^{{\mathrm{u,c}}}} + {\varepsilon ^{\mathrm{q,}\MSDC}}$. For QoS requirements $\left( {D_{\max }^{{\MSDC}},\varepsilon _{\max }^{{\MSDC}}} \right)$, $\varepsilon _{\max }^{{\MSDC}}$ can be divided into $\varepsilon _{\max }^{{\mathrm{u,c}}}$ and $\varepsilon _{\max }^{{\mathrm{q,}\MSDC}}$ under the delay constraints $D_{\max }^{{\MSDC}}$, i.e., $\varepsilon _{\max }^{{\MSDC}} = \varepsilon _{\max }^{{\mathrm{u,c}}} + \varepsilon _{\max }^{{\mathrm{q,}\MSDC}}$, where $\varepsilon _{\max }^{{\mathrm{u,c}}}$ is the requirement of ${\varepsilon ^{{\mathrm{u,c}}}}$. In quasi-static channels, ${R^{\MSDC}}$ is constant within $D_{\max }^{{\MSDC}}$ \cite{LiuTCOM2025}. By substituting ${R^{\MSDC}} = E_{\mathrm{B}}^{\MSDC}$, ${\varepsilon ^{{\mathrm{u,c}}}} = \varepsilon _{\max }^{{\mathrm{u,c}}}$, and $V\left( \gamma  \right) \!=\! 1$ into \eqref{eq_rate_MDC}, we can obtain the required threshold of post-processing SNR $\gamma _{{\mathrm{th}}}^{\MSDC}$ for satisfying $\left( {D_{\max }^{{\MSDC}},\varepsilon _{\max }^{{\MSDC}}} \right)$, which is given in \eqref{apx_Proposition_NA_MDC_3}. \par

Based on \textit{Lemma} \ref{Lemma_NA_max_evaluation}, the LB on NA for REC services under MSDC with identical $\left( {D_{\max }^{\MSDC} ,\varepsilon _{\max }^{\MSDC} } \right)$ of UEs with same mobility when $D_{\max }^{{\MSDC}} \le {T_{\mathrm{C}}}$ can be expressed as
\begin{equation}\label{eq_LB_NA_MDC}
\setlength{\belowdisplayskip}{0pt}
    {\tilde \eta } _{{\mathrm{LB}}}^{{\MSDC}} = {\mathbb{P}_\psi }\big( {\hat \gamma _{i,n,k}^{{\mathrm{ZF,LB}}} \ge \gamma _{{\mathrm{th}}}^{\MSDC} } \big).
\end{equation}

\subsubsection{NA when $D_{\max }^{{\LSDC}} \le {T_{\mathrm{C}}}$}
Based on \eqref{apx_Proposition_NA_LDC_1} and \eqref{eq_LB_NA_MDC}, the LB on NA for REC services under LSDC with identical $\left( {D_{\max }^{\LSDC} ,\varepsilon _{\max }^{\LSDC} } \right)$ of UEs with same mobility, when $D_{\max }^{{\LSDC}} \le {T_{\mathrm{C}}}$, can be given by
\begin{equation}\label{eq_LB_NA_MDC_2}
\setlength{\abovedisplayskip}{0pt}
\setlength{\belowdisplayskip}{0pt}
    {\tilde \eta } _{{\mathrm{LB}}}^{{\LSDC}} = {\mathbb{P}_\psi }\big( {\hat \gamma _{i,n,k}^{{\mathrm{ZF,LB}}} \ge \gamma _{{\mathrm{th}}}^{\LSDC} } \big).
\end{equation}

\subsubsection{NA under $D_{\max }^{{\MSDC}} > {T_{\mathrm{C}}}$}
Based on \eqref{apx_Proposition_NA_MDC_3} and \eqref{eq_LB_NA_LDC}, the LB on NA for REC services under MSDC with identical $\left( {D_{\max }^{\MSDC} ,\varepsilon _{\max }^{\MSDC} } \right)$ of UEs with same mobility, when $D_{\max }^{{\MSDC}} > {T_{\mathrm{C}}}$, can be given by
\begin{equation}\label{eq_LB_NA_LDC_2}
\setlength{\abovedisplayskip}{0pt}
\setlength{\belowdisplayskip}{0pt}
    {\hat \eta } _{{\mathrm{LB}}}^{{\MSDC}} = \prod\nolimits_{j = 1}^{{\NC}} {{\mathbb{P}_\psi }\big( {\hat \gamma _{i + \left( {j - 1} \right)\Ns,n,k}^{{\mathrm{ZF,LB}}} \ge \gamma _{{\mathrm{th}}}^{\MSDC} } \big)}.
\end{equation}

Since $\tilde{\eta}_{\mathrm{LB}}^{\mathtt{\varsigma}}$, $\mathtt{\varsigma} \!\!\in\!\! \{\LSDC,\MSDC\}$, in \eqref{eq_LB_NA_MDC} and \eqref{eq_LB_NA_MDC_2} depend only on $\gamma_{\mathrm{th}}^{\mathtt{\varsigma}}$, ensuring the REC with a higher $\gamma_{\mathrm{th}}^{\mathtt{\varsigma}}$ automatically ensures the REC with a lower $\gamma_{\mathrm{th}}^{\mathtt{\varsigma}}$. Ensuring REC when $D_{\max}^{\mathtt{\varsigma}} > T_{\mathrm{C}}$ is independent of the case when $D_{\max}^{\mathtt{\varsigma}} \le T_{\mathrm{C}}$, since $\hat{\eta}_{\mathrm{LB}}^{\mathtt{\varsigma}}$ in \eqref{eq_LB_NA_LDC} and \eqref{eq_LB_NA_LDC_2} depends on both $\gamma_{\mathrm{th}}^{\mathtt{\varsigma}}$ and $\NC$. Moreover, when $D_{\max}^{\mathtt{\varsigma}} > T_{\mathrm{C}}$, REC under different delay constraints shows weak coupling across UEs with varying mobility. Specifically, ensuring REC with larger $\gamma_{\mathrm{th}}^{\mathtt{\varsigma}}$ or $\NC$ does not guarantee it for smaller values, even though $\hat{\eta}_{\mathrm{LB}}^{\mathtt{\varsigma}}$ increases as $\gamma_{\mathrm{th}}^{\mathtt{\varsigma}}$ or $\NC$ decreases. This is because a smaller $\NC$, due to stringent $D_{\max}^{\mathtt{\varsigma}}$, is accompanied by a larger $\gamma_{\mathrm{th}}^{\mathtt{\varsigma}}$. Finally, by combining NAs across delay constraints and mobility levels, we derive a LB on overall NA valid for both $D_{\max}^{\mathtt{\varsigma}} \le T_{\mathrm{C}}$ and $D_{\max}^{\mathtt{\varsigma}} > T_{\mathrm{C}}$. Based on this, \eqref{eq_NA_all_REC} follows, completing the proof.

\par

\bibliographystyle{IEEEtran}
\bibliography{myref}

@ARTICLE{Dao2021Tuts,
  author={Dao, Nhu-Ngoc and others},
  journal={IEEE Commun. Surveys Tuts.}, 
  title={Survey on Aerial Radio Access Networks: Toward a Comprehensive 6{G} Access Infrastructure}, 
  year={2nd Quarter 2021},
  volume={23},
  number={2},
  pages={1193-1225}}

@ARTICLE{Yao2022MNET,
  author={Yao, Zhuohui and Cheng, Wenchi and Zhang, Wei and Zhang, Tao and Zhang, Hailin},
  journal={IEEE Netw.}, 
  title={The Rise of {UAV} Fleet Technologies for Emergency Wireless Communications in Harsh Environments}, 
  year={Jul. 2022},
  volume={36},
  number={4},
  pages={28-37}}

@ARTICLE{LiuTCOM2025,
  author={Liu, Junyu and others},
  journal={IEEE Trans. Commun.}, 
  title={Towards Reliable Communications with Delay Requirement in Aerial Disaster Emergency Networks via Coordinated Multi-Point}, 
  year={Oct. 2025},
  volume={73},
  number={10},
  pages={8781-8796}}

@ARTICLE{She2021Proc,
  author={She, Changyang and others},
  journal={Proc. IEEE}, 
  title={A Tutorial on Ultrareliable and Low-Latency Communications in 6{G}: Integrating Domain Knowledge Into Deep Learning}, 
  year={Mar. 2021},
  volume={109},
  number={3},
  pages={204-246}}

@ARTICLE{Elhoushy2021TWC,
  author={Elhoushy, Salah and Hamouda, Walaa},
  journal={IEEE Trans. Wireless Commun.}, 
  title={Limiting {D}oppler Shift Effect on Cell-Free Massive {MIMO} Systems: A Stochastic Geometry Approach}, 
  year={Sept. 2021},
  volume={20},
  number={9},
  pages={5656-5671}}

@ARTICLE{Liu2025TVT,
  author={Liu, Luyang and Wu, Shaochuan and Ma, Yongkui},
  journal={IEEE Trans. Veh. Technol.}, 
  title={Resource Allocation in Cell-Free Massive {MIMO} with the Coexistence of e{MBB} and {URLLC}}, 
  year={Oct. 2025},
  volume={74},
  number={10},
  pages={16039-16054}}

@ARTICLE{Imran2024Proc,
  author={Imran, Muhammad A. and others},
  journal={Proc. IEEE}, 
  title={Exploring the Boundaries of Connected Systems: Communications for Hard-to-Reach Areas and Extreme Conditions}, 
  year={Jul. 2024},
  volume={112},
  number={7},
  pages={912-945}}

@ARTICLE{Mohammadi2024Proc,
  author={Mohammadi, Mohammadali and Mobini, Zahra and Q. Ngo, Hien and Matthaiou, Michail},
  journal={Proc. IEEE}, 
  title={Next-Generation Multiple Access With Cell-Free Massive {MIMO}}, 
  year={Sept. 2024},
  volume={112},
  number={9},
  pages={1372-1420}}

@ARTICLE{Dong2021TWC,
  author={Dong, Rui and others},
  journal={IEEE Trans. Wireless Commun.}, 
  title={Deep Learning for Radio Resource Allocation With Diverse Quality-of-Service Requirements in 5{G}}, 
  year={Apr. 2021},
  volume={20},
  number={4},
  pages={2309-2324}}

@book{3gpp_22_261_TS,
 title = {Service Requirements for the 5G System},
 address = {Technical Specification (TS) 22.261, 3rd Generation Partnership Project},
 year = {Jun. 2022}
}

@ARTICLE{Liu2025iotj,
  author={Liu, Zhaoqing and Li, Kang and Wang, Yan and Zhu, Pengcheng},
  journal={IEEE Internet Things J.}, 
  title={Mixed Traffic Scheduling With Latency and Jitter Analysis in {URLLC} Industrial Automation}, 
  year={Jun. 2025},
  volume={12},
  number={12},
  pages={18820-18835}}

@ARTICLE{Elwekeil:TCOM:2023,
  author={Elwekeil, Mohamed and others},
  journal={IEEE Trans. Commun.},
  title={Power Control in Cell-Free Massive {MIMO} Networks for {UAV}s {URLLC} Under the Finite Blocklength Regime}, 
  year={2023},
  volume={71},
  number={2},
  pages={1126-1140},
  ISSN={1558-0857},
  month={Feb},}

@ARTICLE{Zhang2025TVT,
  author={Zhang, Xiaoqing and others},
  journal={IEEE Trans. Veh. Technol.}, 
  title={Multi-Agent Deep Reinforcement Learning-based Uplink Power Control in Cell-Free Massive {MIMO} with Mobile Users}, 
  year={2025 (Early Access)},
  volume={},
  number={},
  pages={1-16}}

@ARTICLE{Abbas2025IoTJ,
  author={Abbas, Mustafa S. and Mobini, Zahra and Ngo, Hien Quoc and Shin, Hyundong and Matthaiou, Michail},
  journal={IEEE Internet Things J.}, 
  title={Joint {AP} Selection and Power Allocation for Unicast-Multicast Cell-Free Massive {MIMO}}, 
  year={Nov. 2025},
  volume={12},
  number={22},
  pages={47135-47150}}

\end{document}